\DeclareMathOperator{\Sp}{Sp}
\renewcommand{\leq}{\leqslant}
\renewcommand{\geq}{\geqslant}
\newcommand{\1}{\mbox{1}\hspace{-0.25em}\mbox{l}}
\newcommand{\id}{\mathds{1}}
\newcommand{\poly}{\operatorname{poly}}
\newcommand{\code}[1]{\texttt{\detokenize{#1}}}
\newcommand{\doublewidetilde}[1]{{%
  \mathpalette\double@widetilde{#1}%
}}
\newcommand{\double@widetilde}[2]{%
  \sbox\z@{$\m@th#1\widetilde{#2}$}%
  \ht\z@=.9\ht\z@
  \widetilde{\box\z@}%
}
\newtheorem{theorem}{Theorem}
\begin{document}

\title{Measurement incompatibility and quantum steering via linear programming}

\author{\fontsize{11}{13}\selectfont Lucas E. A. Porto}
\address{Sorbonne Université, CNRS, LIP6, F-75005 Paris, France}
\address{Instituto de Física Gleb Wataghin, Universidade Estadual de Campinas (Unicamp), Rua Sérgio Buarque de Holanda 777, Campinas,
São Paulo 13083-859, Brazil}
\orcid{0000-0001-9509-5951}
\author{\fontsize{11}{13}\selectfont Sébastien Designolle}
\address{Zuse Institute Berlin, Takustraße 7, 14195 Berlin, Germany}
\address{Inria, ENS de Lyon, UCBL, LIP, 69342, Lyon Cedex 07, France}
\orcid{0000-0003-0303-3556}
\author{\fontsize{11}{13}\selectfont Sebastian Pokutta}
\address{Zuse Institute Berlin, Takustraße 7, 14195 Berlin, Germany}
\address{Berlin Institute of Technology, 10587 Berlin, Germany}
\orcid{0000-0001-7365-3000}
\author{\fontsize{11}{13}\selectfont Marco Túlio Quintino}
\address{Sorbonne Université, CNRS, LIP6, F-75005 Paris, France}
\orcid{0000-0003-1332-3477}
\date{10th June 2026}

\begin{abstract}
The problem of deciding whether a set of quantum measurements is jointly measurable is known to be equivalent to determining whether a quantum assemblage is unsteerable.
This problem can be formulated as a semidefinite program (SDP).
However, the number of variables and constraints in such a formulation grows exponentially with the number of measurements, rendering it intractable for large measurement sets.
In this work, we circumvent this problem by transforming the SDP into a hierarchy of linear programs that compute upper and lower bounds on the incompatibility robustness with a complexity that grows polynomially in the number of measurements.
The hierarchy is guaranteed to converge and it can be applied to arbitrary measurements --- including non-projective POVMs (Positive Operator-Valued Measures) --- in arbitrary dimensions.
While convergence becomes impractical in high dimensions, in the case of qubits our method reliably provides accurate upper and lower bounds for the incompatibility robustness of sets with several hundred measurements in a short time using a standard laptop.
We also apply our methods to qutrits, obtaining non-trivial upper and lower bounds in scenarios that are otherwise intractable using the standard SDP approach, although such bounds are significantly looser than the ones obtained in the qubit case.
Finally, we show how our methods can be used to construct local hidden state models for states (i.e., to prove that a state cannot lead to steering under any possible local measurements), or conversely, to certify that a given state exhibits steering; for two-qubit quantum states, our approach is comparable to, and in some cases outperforms, the current best methods.
\end{abstract}

\maketitle
\vspace{-20pt}
\tableofcontents

\section{Introduction}
One of the most intriguing features of quantum theory is its prediction of incompatible measurements, that is, measurements that cannot be jointly performed~\cite{Guhne2023IncompReview}.
Many of the counter-intuitive phenomena exhibited by quantum systems are strongly related to this fact.
For example, the use of incompatible measurements in a Bell scenario is a necessary condition to demonstrate nonlocality~\cite{bell64, brunner_review, fine82}, although not sufficient~\cite{quintino15Incompatible, hirsch2018, bene2018} (see however \cite{Plavala2024incomp} for a discussion on a multipartite scenario).
Similarly, the problem of deciding whether a given \textit{assemblage} is \textit{steerable} is mathematically equivalent to deciding whether a given set of measurements is incompatible~\cite{quintino14, uola14, Uola2020SteeringReview}.
Moreover, measurement incompatibility also plays a crucial role for several information-processing protocols~\cite{Buscemi2020PAM,Carmeli2019PAM, Guerini2019PAM, acin07}.
For instance, every set of incompatible measurements provides an advantage in a quantum state discrimination task~\cite{Paul2019PAM, carmeli2018}.

In this context, a prominent problem from both foundational and practical perspectives is that of understanding which sets of measurements can be jointly performed.
In textbook quantum theory, where typically only projective measurements are discussed, the notion of joint measurability is reduced to commutativity of observables~\cite{Schumacher2010textbook}.
Nevertheless, for general positive operator-valued measures (POVMs) the question becomes more subtle.
Except in some of the simplest cases (see, e.g., Refs.~\cite{busch1986, yu2008biased, busch2010biased, stano2010biased, pal2011, yu2013, grinko2024}), there is no analytical criterion that completely characterizes measurement incompatibility.

Numerically, the problem of deciding whether a given set of measurements is jointly measurable can be cast as a semidefinite program (SDP)~\cite{wolf09}.
In fact, more than just solving the decision problem, this formulation is also able to compute \textit{quantifiers} of incompatibility, which indicate how close the measurements are to being compatible~\cite{cavalcanti2016quantifiers, designolle2019quantifiers}.
However, the SDP can only be used in practice when dealing with a sufficiently small number of measurements, as the number of variables in the program quickly explodes.
More precisely, to analyze a set of $m$ measurements with $k$ outcomes using the SDP, one is required to list all the $k^m$ deterministic post-processing strategies, and to each of them associate one semidefinite variable.
Hence, the program's memory consumption grows exponentially with the number of measurements studied, which also results in exponential time complexity.

In this work, we approach the problem from a different perspective, proposing a method to characterize measurement incompatibility which scales efficiently with the number of considered measurements.
The method is a modification of the standard SDP approach which prevents the necessity of listing the deterministic post-processing strategies, thus eliminating the main source of the scalability problem.
In a nutshell, this is done by letting the post-processing strategies be variables of the optimization problem, while approximating the set of positive semidefinite operators of unit trace in the given dimension, i.e., the set of quantum states, by a previously chosen polytope.
The resulting optimization problem becomes a linear program (LP), that computes lower and upper bounds on incompatibility quantifiers of the studied set of measurements.
The tightness of these bounds is dictated by the quality of the chosen polytope, as quantified by its \textit{shrinking factor}.
Thus, the reduction on the complexity of the problem comes at the cost of a controlled loss in accuracy.

The method can be applied to arbitrary sets of measurements in arbitrary dimensions, and, in practice its main limitation is the aforementioned requirement of such polytope approximations.
On the one hand, in the case of qubits, since the set of states can be represented in the Bloch ball, a well studied and fairly simple set, our method is able to achieve a remarkably high precision for sets of several hundred measurements, while the standard SDP cannot go beyond sets of $\sim20$ measurements.
On the other hand, for higher dimensions it becomes ever more difficult to approximate the set of states using polytopes with a reasonable number of vertices~\cite{Fawzi18}.
We propose a few ideas to construct such polytopes, and as an illustration we use some of them to study the incompatibility of qutrit measurements.
The results obtained are decent, but significantly looser than the ones obtained in the qubit case.

Recently, another method for deciding measurement incompatibility that also avoids the standard SDP approach has been proposed~\cite{grinko2024}.
It consists of an analytical criterion that generalizes the ones proposed in Refs.~\cite{busch1986, pal2011, yu2013} for any finite number of unbiased qubit observables.
While Ref.~\cite{grinko2024} brings great theoretical insights for the qubit unbiased measurement problem, in computational terms this criterion leads to a second-order cone program whose performance is similar to the SDP approach.
Thus, the technique proposed in Ref.~\cite{grinko2024} could not be used to tackle instances which are not tractable by standard methods.
For instance, it cannot deal with qubit measurement sets with more than $\sim 20$ measurements, while, as detailed later, the methods presented here can tackle several hundred arbitrary qubit POVMs in seconds/minutes.

In addition to measurement incompatibility, the method we introduce also finds applications in the study of quantum steering.
Since the problem of deciding the steerability of an assemblage is equivalent to deciding the incompatibility of a set of measurements, the method can be straightforwardly used to decide whether a given assemblage is steerable.
More interestingly, the method is also useful to decide which \textit{states}, rather than assemblages, can produce steerable correlations.
We show that it is not difficult to use it in order to certify that a given state is steerable.
Furthermore, it is also possible to combine it with the methods developed in Refs.~\cite{hirsch15, cavalcanti15} in order to construct \textit{local hidden-state (LHS) models} for quantum states, i.e., to prove that states cannot exhibit steering under any possible local measurements.
For the case of two-qubit states, we find these approaches to be comparable, and in some cases better than the state-of-the-art method proposed in Ref.~\cite{Nguyen2019LHSCriterion}.

\section{Preliminaries}
Let us start by briefly reviewing the basic concepts related to our discussion.
More specifically, in this section we formally define the notion of measurement incompatibility and present an example of an incompatibility quantifier, showing how it can be computed via SDP.
Moreover, we introduce the concept of quantum steering and discuss the equivalence between the problem of deciding the steerability of an assemblage and the incompatibility of a set of measurements.

\subsection{Measurement incompatibility}
In quantum theory, measurements are represented by POVMs.
A set of linear operators $\{M_a\}_a$ acting on a complex Hilbert space of dimension $d$ is a POVM if all operators are positive semidefinite, $M_a \geq 0$, and if they add up to identity $\sum_a M_a = \1$, where $\1$ is the identity operator.
Here, the label $a \in [k]$ indicates the measurement outcome, and $[k]:=\{1,\ldots,k\}$.
A set of quantum measurements is then described by $\{M_{a|x}\}$, where $M_{a|x}$ denotes the element of measurement ${x \in [m]}$ associated with outcome $a \in [k]$.
Since for each fixed $x$ the set $\{M_{a|x}\}_a$ is a POVM, it holds that  $M_{a|x} \geq 0$ and $\sum_a M_{a|x} = \1$ for all $x$.
A set of measurements $\{M_{a|x}\}$ is \textit{compatible} (also referred to as \textit{jointly measurable}), if there exists a POVM $\{G_{\lambda}\}$ and probability distributions $p(a|x, \lambda)$ such that
\begin{equation}\label{eq:def_jm}
  M_{a|x} = \sum_\lambda p(a|x, \lambda)G_\lambda.
\end{equation}
That is, these measurements are compatible if their statistics can be retrieved from a single measurement $\{G_\lambda\}$, often referred to as \textit{joint} or \textit{parent} measurement, and some classical post-processing strategies, represented by the probabilities $p(a|x, \lambda)$.
If such a decomposition does not exist, the measurements $\{M_{a|x}\}$ are said to be \textit{incompatible}.
As standard in the field, the expression measurement compatibility and joint measurability are used interchangebly.

For any given set of measurements $\{M_{a|x}\}$, the problem of whether there exists a parent measurement and post-processing strategies satisfying \cref{eq:def_jm} can be readily written as an SDP.
Actually, this SDP formulation of the problem goes beyond that and also computes \textit{quantifiers} of measurement incompatibility.
Various notions of such quantifiers have been proposed in the literature~\cite{designolle2019quantifiers}, and most of them can be computed via SDP.
The main technique we propose in this work can be applied to any quantifier computable via SDP.
For simplicity, however, we develop our discussion in terms of the so-called \textit{incompatibility depolarizing robustness}, defined as
\begin{equation} \label{eq:depolarizing_ass}
  \eta^*(\{M_{a|x}\}) = \max\{\eta ~|~ \{\Lambda_\eta(M_{a|x})\}~ \text{is jointly measurable}\},
\end{equation}
where $\Lambda_\eta$ is the \textit{depolarizing map} with visibility $\eta \in [0,1]$, which acts on an operator $A$ as
\begin{equation}\label{eq:depolarizing_map}
  \Lambda_\eta (A) = \eta A + (1 - \eta) \Tr(A)\frac{\1}{d}.
\end{equation}

To compute the incompatibility depolarizing robustness of a given set of measurements $\{M_{a|x}\}$ with an SDP, first notice that in \cref{eq:def_jm} the probabilities $p(a|x, \lambda)$ can be taken to be deterministic~\cite{Guhne2023IncompReview, ali2009}.
That is, for the classical post-processing we may consider only the deterministic strategies $D(a | x, \lambda) = \delta_{a, f_\lambda(x)}$, where the function $f_\lambda : [m] \rightarrow [k]$ assigns a fixed outcome to every measurement, and $\lambda \in [k^m]$ labels all possible such functions.
Then, one can write
\begin{subequations} \label{eq:main_SDP}
  \begin{align}
    \eta^*(\{M_{a|x}\}) = \max \; &\eta \\
    \text{s.t.} \; &\eta M_{a|x} + (1 - \eta)\Tr (M_{a|x}) \frac{\1}{d} = \sum_{\lambda = 1}^{k^m}D(a|x, \lambda)G_\lambda, \quad \forall a \in [k],~x \in [m]\\
    &G_\lambda \geq 0, \quad \forall \lambda \in [k^m],
  \end{align}
\end{subequations}
which is an SDP where the variables are $\eta$ and $\{G_\lambda\}$.
If the solution to this problem is ${\eta(\{M_{a|x}\})\geq1}$, then the measurements $\{M_{a|x}\}$ are jointly measurable.
Otherwise, they are incompatible.

An important observation about the program above is the fact that it requires $O(k^m)$ semidefinite variables.
Thus, as one increases the number of measurements under study, one has SDPs with exponentially many more variables.
This is the main problem we tackle in this work.
However, before we discuss our proposed method, let us briefly revisit the basic ideas of quantum steering.

\subsection{Quantum steering}\label{subsec:steering_intro}
Suppose that two observers, Alice and Bob, each receive a quantum system prepared by a third party.
Then, Alice performs a measurement associated to the classical label $x \in [m]$ on her system, and obtains outcome $a \in [k]$.
She then communicates her measurement and outcome to Bob, who stores his state with the label $\rho_{a|x}$.
After repeating this procedure a sufficient amount of times, Bob is able to perform tomography and estimate each $\rho_{a|x}$, for all $x$ and $a$.
Additionally, he can also estimate the probabilities $p(a|x)$.
This information is concisely represented by the \textit{steering assemblage} $\{\sigma_{a|x}\}$, where $\sigma_{a|x} = p(a|x) \rho_{a|x}$.

In fact, a steering assemblage can be any set of operators $\{\sigma_{a|x}\}$ satisfying the constraints of positivity $\sigma_{a|x} \geq 0$ for all $x$ and $a$, and a normalization of the form $\sum_a \sigma_{a|x} = \rho_B$, where $\Tr(\rho_B) = 1$, for all $x$.
In terms of the protocol described above, $\rho_B$ represents Bob's reduced state without the knowledge of Alice's measurement and outcome.

When given an assemblage $\{\sigma_{a|x}\}$, if there exists a set of states $\{\rho_\lambda\}$ and probability distributions $p(\lambda)$ and $p(a|x, \lambda)$ such that
\begin{equation}\label{eq:def_unst}
  \sigma_{a|x} = \sum_\lambda p(\lambda)p(a|x, \lambda)\rho_\lambda
\end{equation}
for all $a \in [k]$ and $x \in [m]$, then we say that the assemblage is \textit{unsteerable}, or it is said to admit a \textit{local hidden-state} (LHS) model.
Otherwise, the assemblage is said to be \textit{steerable}.

Notably, if Alice and Bob share a maximally entangled state $\ket{\phi^+_d} = \frac{1}{\sqrt{d}}\sum_{i = 0}^{d-1}\ket{i}\ket{i}$ and Alice applies the measurements $\{M_{a|x}\}$ to her system, then the resulting assemblage is given by $\sigma_{a|x} = \frac{1}{d}M_{a|x}^T$.
This assemblage is unsteerable if, and only if, the measurements $\{M_{a|x}\}$ are jointly measurable~\cite{quintino14, uola14}.
That is, any joint measurability problem can be thought of as a steering problem.

Conversely, the problem of deciding whether an assemblage $\{\sigma_{a|x}\}$ is unsteerable can also be mapped to deciding whether a set of measurements is compatible.
For any given (non-signalling) assemblage $\{\sigma_{a|x}\}$, construct a set of measurements $\{M_{a \vert x}\}$ in the following way.
Take the state $\rho_B = \sum_a \sigma_{a \vert x}$, and consider the operator $\tilde{\rho}_B^{-1/2} = \sum_i \frac{1}{\sqrt{p_i}}\ketbra{\phi_i}$, where $p_i$ are the nonzero eigenvalues of $\rho_B$ and $\ket{\phi_i}$ their corresponding eigenvectors.
Then, define $M_{a \vert x} = \tilde{\rho}_B^{-1/2} \sigma_{a|x} \tilde{\rho}_B^{-1/2}$.
The positive operators $M_{a \vert x}$ are normalised to identity in the subspace formed by the image of $\rho_B$, i.e., $\sum_a M_{a \vert x} = \id_{\text{Im}(\rho_B)}$, and therefore constitute proper measurements in this vector space.
It then follows that the set $\{M_{a \vert x}\}$ is jointly measurable if and only if the assemblage $\{\sigma_{a|x}\}$ is unsteerable (see Ref.~\cite{uola15} for a detailed discussion).

In complete analogy to the previous subsection, given the steering assemblage $\{\sigma_{a|x}\}$ we define the \textit{steering depolarizing robustness} (see~\cref{eq:depolarizing_map}) as
\begin{equation}
  \eta^*(\{\sigma_{a|x}\}) = \max\{\eta ~|~ \{\Lambda_\eta(\sigma_{a|x})\}~ \text{is unsteerable}\},
\end{equation}
which can be computed by the SDP
\begin{subequations} \label{eq:SDP_steering}
  \begin{align}
    \eta^*(\{\sigma_{a|x}\}) = \max \; &\eta \\
    \text{s.t.} \; &\eta \sigma_{a|x} + (1 - \eta)\Tr (\sigma_{a|x}) \frac{\1}{d} = \sum_{\lambda = 1}^{k^m}D(a|x, \lambda)\sigma_\lambda, \quad \forall a \in [k], ~x \in [m] \label{eq:SDP_steering_lhs_constraint}\\
    &\sigma_\lambda \geq 0, \quad \forall \lambda \in [k^m],
  \end{align}
\end{subequations}
where $\sigma_\lambda$ are sub-normalised states, associated to the product $p(\lambda)\rho_\lambda$ in \cref{eq:def_unst}, and $D(a \vert x, \lambda)$ are deterministic probability distributions, defined as in \cref{eq:main_SDP}.
As expected, this program suffers from the same problem as the one discussed in the previous subsection.
That is, it contains a number of variables that grows exponentially with the number of considered measurements $m$.
This means that, in practice, this approach is limited to problems with sufficiently few measurements.

\section{Efficient method for quantifying steering/measurement incompatibility}
In this section, we present a method to quantify measurement incompatibility and steering which scales efficiently with the number of measurements.
For simplicity, it is presented in terms of steering rather than measurement incompatibility, but the reader should keep in mind that these two problems are equivalent, as discussed in the previous section.
The method consists of a linear program that computes lower and upper bounds on steering quantifiers of a given assemblage.
It requires a polytope approximation of the set of quantum states in the desired dimension, which has to be chosen before running the LP.
After presenting the core idea of the method, we analyze how the polytope approximations are related to the accuracy of the given bounds, which leads to a formalization of the computational complexity of the method.
Finally, we discuss how to construct polytopes to be used in the linear program in practice.

\subsection{Linear program for steerability}
Consider the problem of computing the depolarizing robustness of a given steering assemblage $\{\sigma_{a|x}\}$ in dimension $d$, where $a \in [k]$, $x \in [m]$.
If $m$ is not too large, one may compute it via the SDP \eqref{eq:SDP_steering}.
However, since the number of variables in the SDP grows exponentially with $m$, this approach quickly becomes impractical.
This is due to the fact that in order to write the problem as an SDP, one has to write down each one of the $k^m$ deterministic strategies $D(a|x, \lambda)$, and associate one positive semidefinite variable $\sigma_\lambda$ to each one of these strategies.
Indeed, comparing the right-hand side of \cref{eq:def_unst} with the right-hand side of \cref{eq:SDP_steering_lhs_constraint}, we see that $p(a|x, \lambda)$ was turned into $D(a|x, \lambda)$, and the product $p(\lambda)\rho_\lambda$ was merged into the sub-normalized states $\sigma_\lambda$, which are variables of the SDP.

\begin{figure}[ht]
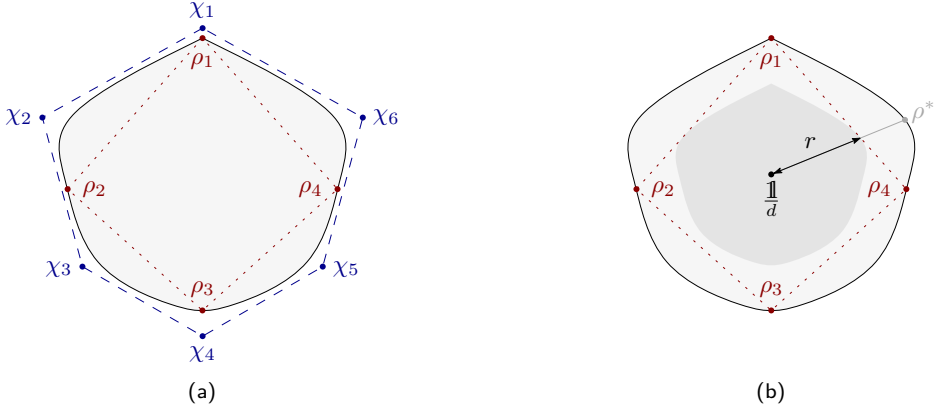

  \centering
  \begin{subfigure}{0.35\textwidth}
    \includegraphics[page=1,width=\textwidth]{figs/shrinking.pdf}
    \caption{}
    \label{fig:out}
  \end{subfigure}
  \hspace{2cm}
  \begin{subfigure}{0.35\textwidth}
    \includegraphics[page=2,width=\textwidth]{figs/shrinking.pdf}
    \caption{}
    \label{fig:shr}
  \end{subfigure}
  \caption{
    Illustration of the inner and outer approximations of the state space (light grey).
    (a) The red dotted polytope with vertices $\{\rho_1, \rho_2, \rho_3, \rho_4\}$ is an inner approximation of the state space.
    The blue dashed polytope with vertices $\{\chi_1, \chi_2, \chi_3, \chi_4, \chi_5, \chi_6\}$ is an outer approximation of the state space.
    (b) The dark grey area represents the image of the state space under the action of $\Lambda_r$, which is contained in the polytope formed by the set of states $\{\rho_1, \rho_2, \rho_3, \rho_4\}$.
    This set's shrinking factor is the maximum $r$ for which such an inclusion holds.
    We call $\rho^*$ the worst state, that is, such that $\Lambda_r(\rho^*)$ lies exactly on the corresponding facet.
  }
  \label{fig:approx}
\end{figure}

Alternatively, now consider a fixed set of $d$-dimensional states $\{\rho_\lambda\}$ (see \cref{fig:out}), say, with $\lambda \in [n]$, and let us define an \textit{approximate} steering depolarizing robustness of the assemblage $\{\sigma_{a|x}\}$ with respect to the set of states $\{\rho_\lambda\}$ by
\begin{equation}\label{eq:def_approx_robustness}
  \tilde{\eta}(\{\sigma_{a|x}\}, \{\rho_\lambda\}) = \max\{\eta ~|~ \Lambda_\eta(\sigma_{a|x}) = \sum_{\lambda = 1}^n p(\lambda)p(a|x, \lambda)\rho_\lambda, ~\forall a \in [k],~x \in [m]\}.
\end{equation}
That is, $\tilde{\eta}(\{\sigma_{a|x}\}, \{\rho_\lambda\})$ can be thought of as the steering depolarizing robustness of $\{\sigma_{a|x}\}$ (see \cref{eq:depolarizing_ass}) if one restricts the LHS models to only include the fixed $\{\rho_\lambda\}$ as the hidden states.
This can be seen as if we were approximating the set of $d$-dimensional quantum states by the polytope formed by vertices $\{\rho_\lambda\}$.
Since each $\rho_\lambda$ is a proper state, it naturally follows that ${\tilde{\eta}(\{\sigma_{a|x}\}, \{\rho_\lambda\}) \leq \eta^*(\{\sigma_{a|x}\})}$.

Notice that the definition given by  \cref{eq:def_approx_robustness} still makes sense if we replace the set of states $\{\rho_\lambda\}$ by a set of $n$ general $d \times d$ matrices, not necessarily positive nor having unit trace.
Importantly, these sets of general matrices can be used to obtain upper bounds on $\eta(\{\sigma_{a \vert x}\})$.
Indeed, take a set of $d \times d$ matrices $\{\chi_\lambda\}$ such that the set of density matrices is contained in the convex hull of $\{\chi_\lambda\}$, i.e., such that every quantum state in dimension $d$ can be written as a convex combination of the matrices $\{\chi_\lambda\}$ (see \cref{fig:out}).
Then, it holds that $\tilde{\eta}(\{\sigma_{a|x}\}, \{\chi_\lambda\}) \geq \eta^*(\{\sigma_{a|x}\})$.

One advantage of defining an approximate steering robustness in such a way is the fact that, fixing a set of states (or general $d \times d$ matrices) $\{\rho_\lambda\}$, it can be computed by the linear program
\begin{subequations}\label{eq:main_LP}
  \begin{align}
    \tilde{\eta}(\{\sigma_{a|x}\}, \{\rho_\lambda\}) = \max \; &\eta \\
    \text{s.t.} \; &\eta \sigma_{a|x} + (1 - \eta)\Tr (\sigma_{a|x}) \frac{\1}{d} = \sum_{\lambda = 1}^n p(\lambda, a|x)\rho_\lambda, \quad \forall a \in [k],~x \in [m]\\
    &p(\lambda, a|x) \geq 0, \quad \forall \lambda \in [n],~a \in [k],~x \in [m]\\
    &\sum_{a = 1}^k p(\lambda, a|x) = \sum_{a = 1}^k p(\lambda, a|x'), \quad \forall \lambda \in [n],~x \in [m],~x' \in [m],
  \end{align}
\end{subequations}
whose variables are $\eta$ and the probability distributions $p(\lambda, a|x)$.
Notice that in terms of the right-hand side of \cref{eq:def_unst}, the latter are associated with the product $p(\lambda)p(a|x, \lambda)$.
The above program thus has $O(kmn)$ \textit{real} variables and $O(kmn)$ constraints, meaning its size scales only \textit{linearly} with the number of measurements $m$.
This contrasts with the exponential scaling of the standard SDP approach \eqref{eq:SDP_steering}.
Moreover, in the program \eqref{eq:main_LP} the variables are real numbers, whereas in the SDP \eqref{eq:SDP_steering} the variables are semidefinite matrices, inherently more complex objects.
In this sense, the linear program \eqref{eq:main_LP} provides a substantial improvement in terms of the scalability of the problem with the number of measurements.

Importantly, this improvement is not simply due to approximating the SDP constraints with an LP, which cannot be done efficiently in general; see Refs.~\cite{fmptw2011jour,bfps2012jour,braun2014matching}, where it is shown that approximating a single SDP constraint with an LP can require exponential size LPs.
The linear program \eqref{eq:main_LP} is thus a non-trivial improvement over the SDP \eqref{eq:SDP_steering} and much more akin to a column generation approach customary in linear programming, trading off accuracy for size and efficiency.

Indeed, as we will formalize in the next sections, the main downside of the LP is related to its accuracy.
It does not compute the exact steering robustness of the given assemblage, but rather lower and upper bounds on it.
One should notice, however, that in practice the SDP approach \cref{eq:SDP_steering} is also approximate, since a semidefinite program is only polynomial-time solvable in an approximate fashion \cite{grotschel1981ellipsoid, alizadeh1995interior, nesterov1994interior, vandenberghe1996semidefinite}.
Even so, the standard SDP scales better with the tolerance than the method we propose.
Therefore, the cost of being more efficient with respect to the number of measurements is a controlled loss in precision.

To better understand this trade off between accuracy and efficiency, notice that the precision of the bounds provided by the linear program \eqref{eq:main_LP} has to be strongly related to the set of states $\{\rho_\lambda\}$ (or arbitrary matrices $\{\chi_\lambda\}$), which one has to choose before running the program.
In fact, we will show that there exists a single parameter of the set of states $\{\rho_\lambda\}$, the so-called \textit{shrinking factor}, which quantifies the precision of the bounds computed by our method when such a set is used.
In the first place, this fact interestingly characterizes adequate polytopes for the method.
And furthermore, it also enables a formalization of its computational complexity.

\subsection{Understanding the approximation factor}
Consider a set of density matrices $\{\rho_\lambda\}$ acting on a $d$-dimensional Hilbert space and containing the maximally mixed state in the interior of its convex hull.
This set's \textit{shrinking factor} is defined as the maximum $r \in \mathbb{R}$ such that for any given $d$-dimensional state $\rho$, one can write $\Lambda_r(\rho)$ as a convex combination of the states $\{\rho_\lambda\}$ (see \cref{fig:shr}).
In other words, it is the maximum $r$ such that the image of the state space under the action of the depolarizing channel $\Lambda_r$ lies entirely inside the polytope generated by $\{\rho_\lambda\}$.
Thus, the shrinking factor of a finite set of states belongs to the interval $(0,1)$, and the closer it is to one, the better the convex hull of the set covers the whole state space.

In the case of qubits, where the state space is represented by the Bloch ball, the shrinking factor of a given polytope is simply the radius of the largest sphere that fits inside it.
This radius is simply the minimum distance between facets of this polyhedron and the origin of the Bloch ball.
In higher dimensions the shrinking factor of a polytope approximating the set of quantum states can also be easily computed if we have its facet description.
Indeed, consider a polytope with hermitian facets $\Tr(F_i\rho)\leq b_i$ indexed by $i$.
Denoting $\Sp(F)$ the spectrum of $F$, the equality $\max\{\Tr(F_i\rho)~|~\rho\geq0,\Tr\rho=1\}=\max\Sp(F)$ is straightforward and allows us to derive the polytope's shrinking factor $r$, namely,
\begin{equation}
  r=\min_i\frac{db_i-\Tr(F_i)}{d\max\Sp(F_i)-\Tr(F_i)}.
  \label{eq:shr}
\end{equation}
Since our method needs the vertices of the polytope and \cref{eq:shr} its facets, it is necessary to interchange between the polytope's vertex and facet descriptions, which limits the reach of this technique.
This is nonetheless the main tool used in our work to compute shrinking factors.

Note that \cite[Lemma~9.2]{AS17} can provide bounds on the shrinking factor even when the facets of the polytope defined by the set of density matrices are not known, using instead a good approximation of the complex sphere, easier to characterize than the set of quantum states.
In practice, these bounds are not tight enough for our purposes, leading to structures with too many vertices to be useful in our scheme.

However, using this method we can give an estimate of the number of vertices required to reach a certain shrinking factor, which will turn useful in the complexity analysis below.
By using spherical coordinates in the projective real $(2d-1)$-dimensional sphere (identified with the projective complex $d$-dimensional sphere), we can subdivide all angles into $t$ pieces as in~\cite{hirsch16} before using the aforementioned lemma, yielding a polytope with
\begin{equation}\label{eq:angular_subdivision}
    n=\frac{(t-1)^{2d-1}-1}{t-2}\quad\text{pure states and shrinking factor}\quad r\geq2\cos^{4(d-1)}\left(\frac{\pi}{2t}\right)-1.
\end{equation}
For further reference, we note that, asymptotically, $n\sim t^{2(d-1)}$ so that the upper bound on $1-r$ obtained from \cref{eq:angular_subdivision} is equivalent to $\pi^2(d-1)n^{-1/(d-1)}$.

The following theorem expresses how the shrinking factor is associated to approximations of the steering depolarizing robustness.
\begin{theorem}\label{thm:shrinking_sandwich}
  Given a set of states $\{\rho_\lambda\}$ acting on a $d$-dimensional Hilbert space, which includes $\frac{\1}{d}$ in the interior of its convex hull and with shrinking factor $r$, it holds that, for any $d$-dimensional assemblage $\{\sigma_{a|x}\}$,
  \begin{equation}\label{eq:shrinking_sandwich}
    \tilde{\eta}(\{\sigma_{a|x}\}, \{\rho_\lambda\}) \leq \eta^*(\{\sigma_{a|x}\}) \leq \frac{\tilde{\eta}(\{\sigma_{a|x}\}, \{\rho_\lambda\})}{r}.
  \end{equation}
\end{theorem}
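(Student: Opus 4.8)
The left inequality is essentially definitional and needs only a brief remark: since each $\rho_\lambda$ is a genuine quantum state, any feasible point $(\eta,\tilde p)$ of the linear program \eqref{eq:main_LP} yields an LHS decomposition \eqref{eq:def_unst} of $\Lambda_\eta(\sigma_{a|x})$ with hidden states $\rho_\lambda$, weights $\pi(\lambda)=\sum_a\tilde p(a|x,\lambda)$ (well defined, \emph{i.e.}\ independent of $x$, by the no-signalling constraint), and response functions $p(a|x,\lambda)=\tilde p(a|x,\lambda)/\pi(\lambda)$. Hence feasibility for the LP implies unsteerability of the depolarized assemblage, so $\tilde{\eta}(\{\sigma_{a|x}\},\{\rho_\lambda\})\leq\eta^*(\{\sigma_{a|x}\})$.

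For the right inequality the plan is to start from an optimal (or near-optimal) LHS decomposition realizing $\eta^*:=\eta^*(\{\sigma_{a|x}\})$, namely $\Lambda_{\eta^*}(\sigma_{a|x})=\sum_\mu \pi(\mu)p(a|x,\mu)\rho_\mu$ for some \emph{arbitrary} hidden states $\rho_\mu$, and to convert it into a decomposition using only the fixed vertices $\{\rho_\lambda\}$, at the cost of an extra depolarization by factor $r$. The key observation is that the shrinking factor guarantees $\Lambda_r(\rho_\mu)=\sum_\lambda q_\mu(\lambda)\rho_\lambda$ for each $\mu$, with $q_\mu$ a probability distribution over $[n]$. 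Applying $\Lambda_r$ to both sides of the LHS decomposition and using linearity together with the semigroup-type identity $\Lambda_r\circ\Lambda_{\eta^*}=\Lambda_{r\eta^*}$ (which follows directly from \eqref{eq:depolarizing_map}; note the $\Tr$ is preserved), one gets
\begin{equation}
  \Lambda_{r\eta^*}(\sigma_{a|x})=\sum_\mu \pi(\mu)p(a|x,\mu)\,\Lambda_r(\rho_\mu)=\sum_\mu\sum_\lambda \pi(\mu)p(a|x,\mu)q_\mu(\lambda)\,\rho_\lambda.
\end{equation}
Collecting terms, this exhibits $\Lambda_{r\eta^*}(\sigma_{a|x})$ as $\sum_\lambda \tilde p(a|x,\lambda)\rho_\lambda$ with $\tilde p(a|x,\lambda):=\sum_\mu \pi(\mu)p(a|x,\mu)q_\mu(\lambda)\geq0$; one checks $\sum_a\tilde p(a|x,\lambda)=\sum_\mu \pi(\mu)q_\mu(\lambda)$ is independent of $x$, so this is a feasible point of the LP \eqref{eq:main_LP} at visibility $r\eta^*$. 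Therefore $\tilde{\eta}(\{\sigma_{a|x}\},\{\rho_\lambda\})\geq r\,\eta^*(\{\sigma_{a|x}\})$, which rearranges to the claimed bound $\eta^*(\{\sigma_{a|x}\})\leq\tilde{\eta}(\{\sigma_{a|x}\},\{\rho_\lambda\})/r$.

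Two points deserve care. First, one must make sure the starting LHS decomposition actually exists and is attained (or handle a supremum via a limiting argument): $\eta^*$ is a max over a compact feasible set in \eqref{eq:SDP_steering}, so attainment is fine, but if $\eta^*>1$ the depolarizing map $\Lambda_{\eta^*}$ is no longer a channel — still, the algebraic identity $\Lambda_r\circ\Lambda_{\eta^*}=\Lambda_{r\eta^*}$ holds for all real parameters, and positivity of the $\tilde p$ only uses $r\in(0,1)$ and the nonnegativity of $\pi$, $p$, $q_\mu$, so the argument goes through unchanged. Second, one should confirm that the decomposition $\Lambda_r(\rho_\mu)=\sum_\lambda q_\mu(\lambda)\rho_\lambda$ with a genuine probability vector is exactly what the definition of shrinking factor provides — it is, by construction, for every state $\rho_\mu$ in dimension $d$, and the hidden states $\rho_\mu$ in the optimal decomposition may be taken to be normalized (absorbing traces into $\pi(\mu)$). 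The main (mild) obstacle is just bookkeeping the normalization constraints and the $x$-independence of the marginals through the double sum; there is no deep difficulty.
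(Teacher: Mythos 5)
Your proof is correct and rests on the same two ingredients as the paper's: the semigroup identity $\Lambda_r\circ\Lambda_{\eta}=\Lambda_{r\eta}$ and the shrinking-factor decomposition $\Lambda_r(\rho)=\sum_\lambda q_\lambda\rho_\lambda$. The only (cosmetic) difference is the direction of the argument --- you apply $\Lambda_r$ to an optimal LHS model to build a feasible LP point at visibility $r\eta^*$, whereas the paper applies $\Lambda_{1/r}$ to the polytope vertices to show that $\{\Lambda_{1/r}(\rho_\lambda)\}$ outer-approximates the state space --- so this is essentially the same proof.
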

\begin{proof}
  Notice that by the definition of the depolarizing map, given in \cref{eq:depolarizing_map}, for any $\eta, \xi \in \mathbb{R}$, it holds that $\Lambda_\eta \circ \Lambda_\xi = \Lambda_{\eta \xi}$.
  Thus, since for any $d$-dimensional quantum state $\rho$ one can write $\Lambda_r(\rho) = \sum_\lambda q_\lambda \rho_\lambda$, with $q_\lambda \geq 0$ and $\sum_\lambda q_\lambda = 1$, it follows, by applying the map $\Lambda_{1/r}$ on both sides of this equality, that $\rho = \sum_\lambda q_\lambda \Lambda_{1/r}(\rho_\lambda)$.
  This means that any $d$-dimensional quantum state lies in the convex hull of the set $\{\Lambda_{1/r}(\rho_\lambda)\}$, which implies $\tilde{\eta}(\{\sigma_{a|x}\}, \{\Lambda_{1/r}(\rho_\lambda\})) \geq \eta^*(\{\sigma_{a|x}\})$.
  Additionally, note that for any probability distributions $p(\lambda)$ and $\{p(a|x, \lambda)\}$ the equations
  \begin{equation}
    \Lambda_\eta(\sigma_{a|x}) = \sum_\lambda p(\lambda)p(a|x, \lambda)\rho_\lambda
  \end{equation}
  and
  \begin{equation}
    \Lambda_{\eta/r}(\sigma_{a|x}) = \sum_\lambda p(\lambda)p(a|x, \lambda)\Lambda_{1/r}(\rho_\lambda)
  \end{equation}
  are equivalent, which leads to
  \begin{equation}
      \tilde{\eta}(\{\sigma_{a|x}\}, \{\Lambda_{1/r}(\rho_\lambda)\}) = \frac{\tilde{\eta}(\{\sigma_{a|x}\}, \{\rho_\lambda\})}{r}.
  \end{equation}
  This proves the second inequality in \cref{eq:shrinking_sandwich}, while the first one follows straightforwardly from the definition \eqref{eq:def_approx_robustness}, as discussed in the previous subsection.
\end{proof}

In other words, the theorem above states that a set of states $\{\rho_\lambda\}$, when used in the linear program \eqref{eq:main_LP}, gives both a lower and an upper bound on the estimated steering robustness.
Moreover, the ratio between these bounds, and therefore the tolerance of the bounds computed by the LP, is associated with the shrinking factor of the set $\{\rho_\lambda\}$.
Hence, the closer the shrinking factor of this set is to one, the tighter the bounds obtained for the steering robustness of \textit{any} assemblage.

\subsection{Complexity analysis}
The relation between a parameter of the polytope $\{\rho_\lambda\}$ and the tolerance of the bounds provided by the linear program (given by \cref{thm:shrinking_sandwich}) allows us to formalize the computational complexity of the method we propose.
For simplicity, let us fix the number of outcomes $k$ and dimension $d$ and let us keep using $m$ to denote the number of measurements under study.
Throughout, we work in the \emph{unit-cost Random Access Machine (RAM) model} \citep{renegar2001mathematical,boyd2004convex} in which every machine word can store a signed integer of $O(\log (1/\epsilon))$ bits and every arithmetic or logical operation on such words costs~$1$.

On the one hand, the standard SDP \eqref{eq:SDP_steering} has $k^m$ semidefinite variables of size $d\times d$.
Interior-point methods therefore run in
\begin{equation}
  \label{eq:complexity_SDP_generic}
  T_{\mathrm{SDP}}(m,\epsilon)=
  \poly(d,k)\,
  2^{\Theta(m)}\,
  \log(1/\epsilon)
\end{equation}
RAM steps, where the $\log(1/\epsilon)$ factor is the well-known dependence of interior-point solvers on the target additive accuracy~$\epsilon$.

On the other hand, after fixing a polytope with $n$ vertices, our linear program \eqref{eq:main_LP} contains $O(kmn)$ real (scalar) variables and constraints, so a generic interior-point LP solver takes
\begin{equation}
  T_{\mathrm{LP}}(m,n,\epsilon)= \poly(k,m,n)\, \log(1/\epsilon)
\end{equation}
RAM steps.
In all applications below we keep $k$ and $n$ \emph{polynomial} in~$m$, whence $T_{\mathrm{LP}}(m,n,\epsilon)=\poly(m)\,\log(1/\epsilon)$.
Note that since extremal POVMs have at most $d^2$ outcomes~\cite{chiribella06}, many applications restrict to this value of $k\leq d^2$.

The price we pay for this exponential speed-up is an $\epsilon$-dependent loss in precision that is controlled by the shrinking factor $r$ of the chosen polytope: from \cref{thm:shrinking_sandwich}, the additive error on the incompatibility (or steering) robustness never exceeds $r^{-1}-1$.
Using the construction of \cref{eq:angular_subdivision}, we can actually define a hierarchy of polytopes (which leads to a hierarchy of LPs) with an increasing number of vertices $n$, such that $r(n)^{-1}-1=O\!\bigl(n^{-1/(d-1)}\bigr)$,
so choosing $n=\Theta(1/\epsilon^{d-1})$ keeps the geometric error below~$\epsilon$ while preserving polynomial running time.

With this choice we can make the complexity bound of the LP hierarchy in $m$ and $\epsilon^{-1}$ explicit.
First, resolving the $\poly(k,m,n)$ factor, we get a cost of
\begin{equation}
  T_{\mathrm{LP}}(m,n,\epsilon)= O\!\bigl((kmn)^{3/2}\bigr)\,\log(1/\epsilon)
\end{equation}
RAM steps with an interior-point LP solver; see \cite{nesterov1994interior,Renegar1988,Wright1997} for details.
Setting $n=\bigl\lceil 1/\epsilon^{d-1}\bigr\rceil$, the choice that keeps the geometric error below~$\epsilon$ via the shrinking factor estimate $r(n)^{-1}-1=O\!(n^{-1/(d-1)})$, now gives the explicit bound
\begin{equation}\label{eq:complexity_LP_explicit}
  T_{\mathrm{LP}}(m,\epsilon)= O\!\bigl(m^{3/2}\,\epsilon^{-3(d-1)/2}\log(1/\epsilon)\bigr),
\end{equation}
with the constant hidden in the big-$O$ depending only on the fixed number of outcomes~$k$, the fixed dimension $d$, and the interior-point parameters.

Thus the overall running time is polynomial in~$m$, a significant improvement over the exponential $T_{\mathrm{SDP}}=\tilde{O}(2^{m})$ required by the standard SDP formulation.
However, now the running time is also polynomial in $\epsilon^{-1}$, which is exponentially worse than the logarithmic scaling of $T_{SDP}$ with $\epsilon^{-1}$.
Therefore, this formally states that the linear program \eqref{eq:main_LP} indeed trades off accuracy for efficiency.

In practice, this trade off is convenient for several instances, especially for large $m$.
While the standard SDP approach is limited to a considerably small number of measurements,
we will later show that the LP is able to provide tight bounds for many previously intractable instances --- e.g., $m\sim400$ qubit measurements --- in a matter of seconds using a standard laptop.
Before presenting practical examples of the method, however, we must first address the question of which polytopes to use in the linear program.

\subsection{How to choose a good finite approximation for the set of states}
\label{sec:polytopes}
In this subsection, we discuss the problem of which sets of states $\{\rho_\lambda\}$ are good choices to be used in the linear program \eqref{eq:main_LP}.
We focus specifically on sets of \textit{states}, i.e., $\rho_\lambda \geq 0$ and $\Tr(\rho_\lambda) = 1$, which in addition to giving lower bounds for the estimated steering robustness, can also be used to obtain upper bounds for it, as \cref{thm:shrinking_sandwich} shows.
Moreover, this theorem also shows that the ratio between these bounds is related to the shrinking factor of the state set.

Therefore, we must aim at constructing sets of states with high shrinking factors.
Unfortunately, constructing such sets in arbitrary dimensions is a complicated task, and there is not a known method that consistently does it.
Even so, let us discuss some ideas to obtain reasonable shrinking factors in various dimensions.

To begin with, in the simplest case of qubits, the fact that the state space can be represented by the three-dimensional Euclidean ball is remarkably helpful, since many problems related to `uniformly' distributing points on a sphere have already been thoroughly investigated in the literature (see, for example, Ref.~\cite{HardinSloane_spherical_codes}).
To the best of our knowledge, however, it seems like the problem of finding optimal configurations in terms of the shrinking factor has not been previously analyzed.
Fortunately, even though constructed with other aims (such as maximizing the volume or the minimal distance between the points), the polytopes available in the literature turn out to have excellent shrinking factors.

\begin{table}[H]
  \centering
  \begin{tabular}{cc}
    \toprule
    Number of vertices & Shrinking factor \\
    \midrule
    92   & 0.9716 \\
    162  & 0.9842 \\
    252  & 0.9899 \\
    432  & 0.9942 \\
    612  & 0.9959 \\
    1212 & 0.9979 \\
    5532 & 0.9995 \\
    8192 & 0.9997 \\
    \bottomrule
  \end{tabular}
  \caption{
    Properties of some polytopes obtained from the covering problem under icosahedral symmetry~\cite{HardinSloane_spherical_codes}.
  }
  \label{tab:sloane_shr}
\end{table}

\begin{figure}[ht]
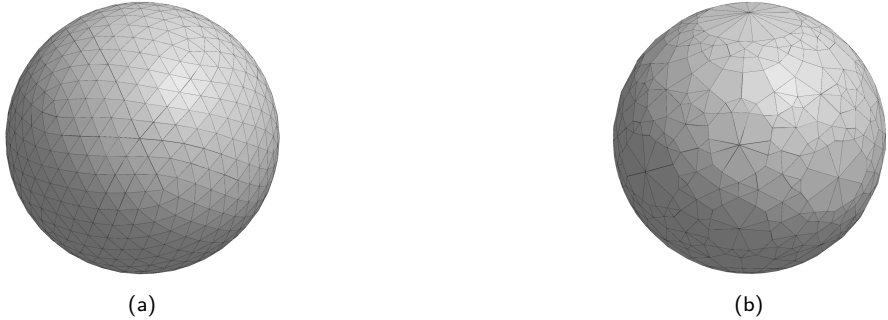

  \centering
  \begin{subfigure}{0.45\textwidth}
    \includegraphics[width=\textwidth,trim={30cm 15cm 30cm 15cm},clip]{figs/polyhedron_sloane612.png}
    \caption{}
    \label{fig:polyhedron_sloane612}
  \end{subfigure}
  \hspace{1cm}
  \begin{subfigure}{0.45\textwidth}
    \includegraphics[width=\textwidth,trim={30cm 15cm 30cm 15cm},clip]{figs/polyhedron_rational646.png}
    \caption{}
    \label{fig:polyhedron_rational646}
  \end{subfigure}
  \caption{
    Polytope approximations of the three-dimensional Euclidean ball.
    (a) Polytope with 612 vertices and shrinking factor 0.9959 from Ref.~\cite{HardinSloane_spherical_codes}.
    (b) Rational polytope $R_{2,27}$ with 646 vertices and shrinking factor 0.9800.
  }
\end{figure}

In particular, the best polytopes in terms of shrinking factor that came up in our search are arrangements that address a version of the so-called \textit{covering problem}, studied in Ref.~\cite{HardinSloane_spherical_codes}.
In \cref{tab:sloane_shr}, we exhibit the shrinking factors of some of these polytopes, and \cref{fig:polyhedron_sloane612} illustrates one of them.
These will be the main qubit polytopes we use to analyze the examples in \cref{sec:examplesJM,sec:examples_steering}.

In higher dimensions, however, the problem is not so simple.
As we have mentioned, in these cases the problem of finding sets of states with a good shrinking factor becomes increasingly more difficult~\cite{Fawzi18}.
The state space itself becomes a set which is geometrically way less understood.
Additionally, computing the shrinking factors of large polytopes in higher dimensions also becomes a complicated task, since our main technique to do so involves facet enumeration.

To the best of our knowledge, the best readily usable software to perform facet enumeration is PANDA~\cite{LR15}.
It turns out that PANDA can only handle rational polytopes, but can exploit symmetry to accelerate the computations.
Then, to habilitate the use of PANDA and the computation of shrinking factors of larger polytopes, it is convenient to consider sets of \textit{rational} states.

Consider the set $R_{d,q}$ of $d$-dimensional rational pure states with denominator at most $q$, e.g.,
\begin{equation}
  \frac15\begin{pmatrix}
    4 & 2 \\
    2 & 1
  \end{pmatrix}\in R_{2,5}
  \quad\text{or}\quad
  \frac19\begin{pmatrix*}[r]
    4 & -2 & 4 \\
    -2 & 1 & -2 \\
    4 & -2 & 4 \\
  \end{pmatrix*}\in R_{3,9}.
\end{equation}
In \cref{tab:shr_rational}, we compute the shrinking factors of some of these sets in dimensions up to five, and \cref{fig:polyhedron_rational646} illustrates a polytope formed by one of these sets.

\begin{table}[ht]
  \centering
  \begin{tabular}{ccccccccc}
    \toprule
    \multirow{2}{*}{$q$} & \multicolumn{2}{c}{$d=2$} & \multicolumn{2}{c}{$d=3$} & \multicolumn{2}{c}{$d=4$} & \multicolumn{2}{c}{$d=5$} \\
    \cmidrule(lr){2-3} \cmidrule(lr){4-5} \cmidrule(lr){6-7} \cmidrule(lr){8-9}
    & Size & Shrinking & Size & Shrinking & Size & Shrinking & Size & Shrinking \\
    \midrule
    2  & 6    & 0.5774 & 15    & 0.3504                    & 28  & 0.2330                    & 45  & 0.1667                    \\
    3  & 14   & 0.7071 & 55    & 0.6168                    & 140 & 0.4625                    & 285 & \hphantom{$^*$}0.3533$^*$ \\
    4  & 14   & 0.7071 & 103   & 0.6168                    & 396 & \hphantom{$^*$}0.5469$^*$ &     &                           \\
    5  & 22   & 0.8165 & 175   & 0.7348                    &     &                           &     &                           \\
    6  & 38   & 0.8944 & 271   & 0.8051                    &     &                           &     &                           \\
    7  & 54   & 0.8944 & 511   & 0.8371                    &     &                           &     &                           \\
    8  & 54   & 0.8944 & 703   & 0.8437                    &     &                           &     &                           \\
    9  & 78   & 0.8944 & 919   & 0.8496                    &     &                           &     &                           \\
    10 & 94   & 0.9370 & 1207  & 0.8872                    &     &                           &     &                           \\
    11 & 118  & 0.9428 & 1807  & 0.9029                    &     &                           &     &                           \\
    12 & 118  & 0.9428 & 2191  & 0.9067                    &     &                           &     &                           \\
    13 & 142  & 0.9428 & 2695  & 0.9146                    &     &                           &     &                           \\
    14 & 174  & 0.9487 & 3271  & 0.9196                    &     &                           &     &                           \\
    15 & 206  & 0.9487 & 4231  & 0.9250                    &     &                           &     &                           \\
    16 & 206  & 0.9487 & 4999  & 0.9250                    &     &                           &     &                           \\
    17 & 238  & 0.9623 & 5863  & 0.9338                    &     &                           &     &                           \\
    18 & 286  & 0.9623 & 6727  & 0.9369                    &     &                           &     &                           \\
    19 & 326  & 0.9701 & 8527  & \hphantom{$^*$}0.9473$^*$ &     &                           &     &                           \\
    27 & 646  & 0.9800 &       &                           &     &                           &     &                           \\
    82 & 5632 & 0.9931 &       &                           &     &                           &     &                           \\
    99 & 8206 & 0.9939 &       &                           &     &                           &     &                           \\
    \bottomrule
  \end{tabular}
  \caption{
    Cardinality and shrinking factor of the set $R_{d,q}$ of rational $d$-dimensional pure states with denominator at most $q$.
    All values are analytical since they come from \cref{eq:shr} for facets with integer coefficients.
    In dimension $d=2$, we include some higher denominators for comparison with the sets used in \cref{subsec:examplesJM_qubits} (see \cref{tab:sloane_shr}).
    Stars indicate polytopes for which PANDA did not terminate, so that the value is only an upper bound.
  }
  \label{tab:shr_rational}
\end{table}

For the case of qubits, the shrinking factors of such rational polytopes are, as expected, considerably lower than the ones from Ref.~\cite{HardinSloane_spherical_codes}.
However, for qutrit polytopes with thousands of vertices, these rational polytopes are the only ones for which we could enumerate all facets, and thus the best be could find.
In \cref{app:shr_sym_qutrit}, we describe a construction of more symmetric (irrational) qutrit polytopes for which we could enumerate facets and observe better shrinking factors up to a few hundreds of vertices.

The limitations in the techniques here discussed are essentially computational.
If one wishes to go beyond them, analytical constructions most likely have to be employed.
Besides avoiding computational limits, handling it with pen an paper may motivate further studies on other symmetric structures.
For example, in \cref{app:mubs} we present an analytical construction based on complete sets of mutually unbiased bases (MUBs) valid in any dimension where they exist.
In the examples we study in the following sections, however, this construction is not sufficient to provide relevant bound using \cref{thm:shrinking_sandwich}.

Another alternative to avoid the computational difficulties related to shrinking factors is to search for sets of states $\{\rho_\lambda\}$ which provide good bounds \textit{only} for a specific assemblage under study.
That is, \cref{thm:shrinking_sandwich} characterizes the set of states which provide accurate bounds for arbitrary assemblages.
However, when a specific assemblage is considered, one may search for sets granting tight bounds just for it.
In fact, it follows from Caratheodory's theorem that for any fixed assemblage $\{\sigma_{a \vert x}\}$, there exists a small set of states $\{\rho_\lambda\}$ for which $\eta^*(\{\sigma_{a \vert x}\}) = \tilde{\eta}(\{\sigma_{a \vert x}\}, \{\rho_\lambda\}$) \cite{Skrzypczyk2020}.
The practical difficulty, however, lies in finding such states $\{\rho_\lambda\}$ for any given assemblage.

We have explored this possibility using the alternating optimization procedure proposed in Ref.~\cite{ohst2024adaptive}.
Unfortunately, the bounds obtained in such a way are hardly ever better than the ones obtained from the polytopes with the best shrinking factors we have.
Nonetheless, we believe further research is possible in this direction, especially targeted at finding the best state sets for any given assemblage.

\section{Applications to measurement incompatibility}\label{sec:examplesJM}
Let us now apply the linear program \eqref{eq:main_LP} to concrete examples.
In this section, in particular, we study the incompatibility of some sets of qubit and qutrit measurements, while examples related to steering are left to \cref{sec:examples_steering}.
For the case of qubits, we compute the incompatibility robustness of much larger sets than those treatable with the standard SDP approach \eqref{eq:main_SDP} with a considerably high precision.
This is mainly due to the existence in the literature of good polytope approximations of the Bloch sphere, as we discussed in \cref{sec:polytopes}.
In the case of qutrits, by using some of the polytopes we constructed, we obtain nontrivial bounds on incompatibility robustness of large sets of qutrit measurements, but which are not as tight as in the qubit case.
The Julia code we used for all the examples discussed below is available at \cite{gitcode}.

\subsection{Qubits}\label{subsec:examplesJM_qubits}
\subsubsection{Fibonacci projective measurements}
For every $m \in \mathbb{N}$, consider the set of measurements $\{M_{a|x}\}_{x = 1}^m$ where the projector $M_{1|x}$ is represented in the Bloch sphere by the vector $r_x \cos((x-1)\phi)\hat{i} + r_x \sin((x-1)\phi)\hat{j} + [1 - (x-1)/(m-1)]\hat{k}$, with $r_x = \sqrt{1 - [1 - (x-1)/(m-1)]^2}$ and $\phi = \pi(\sqrt{5} - 1)$.
This distribution of measurements in the Bloch sphere is inspired by the Fibonacci lattice \cite{gonzalez2010fibonacci}, and, for each $m$, these measurements are close to the most incompatible ones \cite{bavaresco17}.

In \cref{fig:qubit_proj,tab:hfib_incomp_rob}, we compute the incompatibility robustness of sets of measurements generated in this way for several values of $m$, using both the techniques developed in this work and, whenever possible, the SDP \eqref{eq:main_SDP}.
It is known that, for any $m$, the incompatibility robustness of such measurements is lower bounded by $1/2$ \cite{werner89} (see also \cite{zhang2024, renner2024} for recent developments), and we can interestingly observe the convergence to such a value when the number of measurements $m$ grows.

We also exhibit the corresponding execution times, estimated using a 8-core Intel$^\circledR$ Core$^{\text{TM}}$ i5-10400H with 64GB of RAM, and using Mosek~\cite{MOSEK} as the solver for both the linear and semidefinite programs.
As the set of states $\{\rho_\lambda\}$, required to run the linear program \eqref{eq:main_LP}, we use three polytopes from \cref{tab:sloane_shr}.
To each of them, we also added the vertices $\pm \hat{z}$, which increases the performance of the method, as we have empirically observed.
Another empirical observation is the fact that the execution time of the LP in general seems to depend on the alignment between the considered set of measurements and the chosen polytope.
We believe this is the reason why in our examples the execution time does not monotonically increase when more Fibonacci measurements are considered.

\subsubsection{Planar projective measurements}
Now, consider an arbitrary set of qubit projective measurements $\{M_{a\vert x}\}_{x = 1}^m$ defined by coplanar Bloch vectors $\vec{n}_x$.
In Ref.~\cite{andrejic2020}, the authors prove that
\begin{equation}\label{eq:coplanar_lower_bound}
    \eta^*(\{M_{a\vert x}\}) \geq \frac{1}{\sum_{x = 2}^{m}\sin(\frac{\alpha_x - \alpha_{x-1}}{2}) + \cos\frac{\alpha_m}{2}},
\end{equation}
where $\alpha_x$ is the angle between the vectors $\vec{n}_x$ and $\vec{n}_1$, ordered from smallest to largest \cite[Theorem 7]{andrejic2020}.
Given that equality in the above expression holds for $m = 2$, for $m = 3$, and for symmetric measurements, the authors conjecture that it must also hold in general \cite[Conjecture 2]{andrejic2020}.

To test this conjecture, we sampled sets of up to 500 planar measurements, and we computed their incompatibility robustness using the linear program \eqref{eq:main_LP}.
We note that since the measurements are coplanar, the elements of the parent measurement must also lie in the same plane.
Therefore, to use our linear program, we only need a polytope approximation of a circle, and not of the whole Bloch sphere.
This significantly improves the performance of the linear program, which in this case is able to provide lower and upper bounds matching in the fourth decimal digit using a polytope of only $\sim400$ states.

In total, we sampled $10^3$ measurement sets, and for all of them the bounds provided by the linear program are in agreement with \cite[Conjecture 2]{andrejic2020}, which significantly strengthens it.
Our methods therefore give strong numerical evidence that the inequality \eqref{eq:coplanar_lower_bound}, proven in \cite[Theorem 7]{andrejic2020}, is an equality, thus fully characterizing the incompatibility of coplanar projective qubit measurements.

\begin{figure}[H]
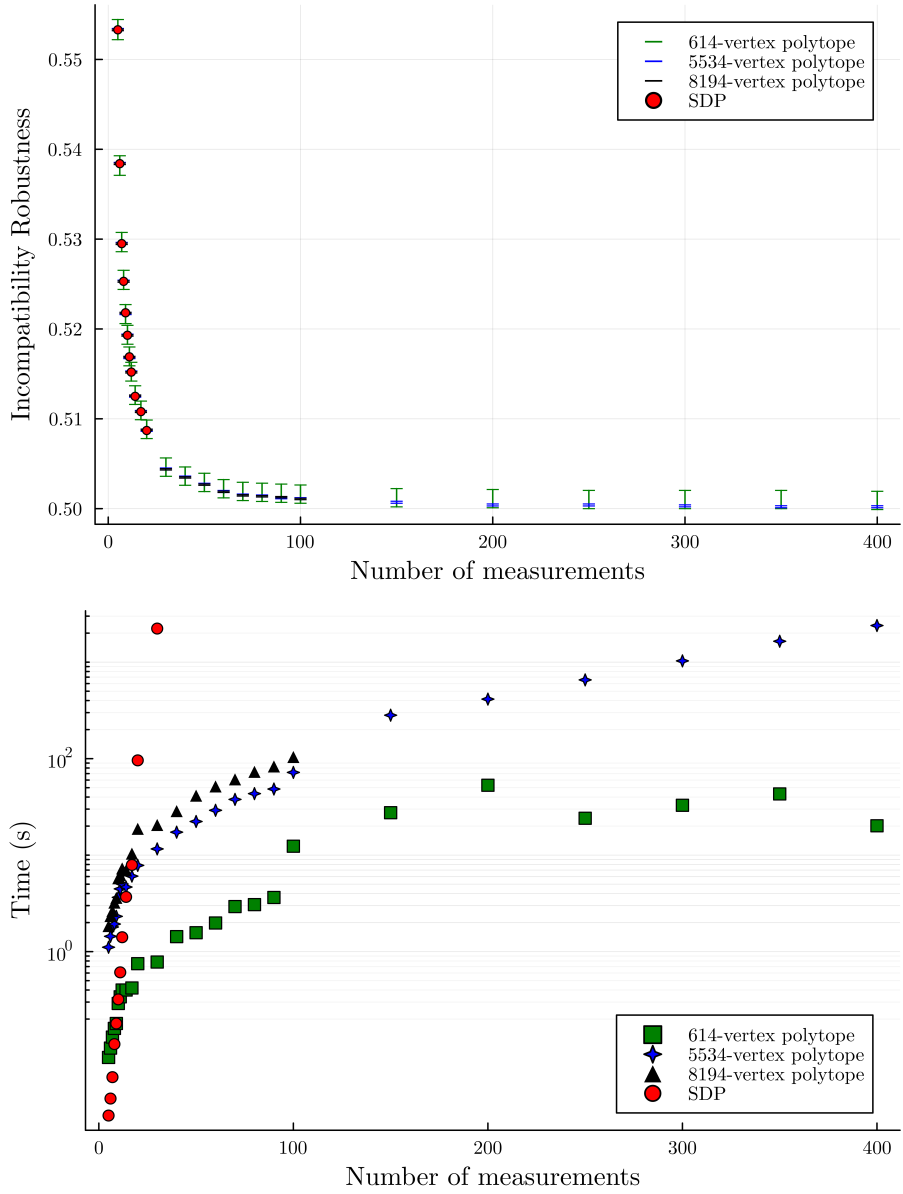

  \centering
  \begin{subfigure}{0.8\textwidth}
    \centering
    \includegraphics[width=\linewidth]{figs/qubits_proj_visibilities.png}
  \end{subfigure}
  \begin{subfigure}{0.8\textwidth}
    \centering
    \includegraphics[width=\linewidth]{figs/qubits_proj_runtimes.png}
  \end{subfigure}
  \caption{
    Comparison between results for the incompatibility robustness of qubit projective measurements obtained with the standard SDP \eqref{eq:main_SDP} and with the linear program \eqref{eq:main_LP}.
    The distributions of the measurements in the Bloch sphere was inspired by the Fibonacci lattice.
    Raw data can be found in \cref{tab:hfib_incomp_rob}.
  }
  \label{fig:qubit_proj}
\end{figure}

\begin{table}[H]
  \centering
  \scalebox{0.82}{
    \begin{tabular}{@{}ccccccccc@{}}
      \toprule
      & \multicolumn{2}{c}{SDP} & \multicolumn{2}{c}{614-vertex polytope} & \multicolumn{2}{c}{5534-vertex polytope} & \multicolumn{2}{c}{8194-vertex polytope} \\
      \cmidrule(lr){2-3} \cmidrule(lr){4-5} \cmidrule(lr){6-7} \cmidrule(lr){8-9}
      $m$ & $\eta$ & Runtime & $\eta$ & Runtime & $\eta$ & Runtime & $\eta$ & Runtime \\
      \midrule
      10  & 0.5193 & 0.32\,s       & [0.5183, 0.5204] & 0.29\,s  & [0.5192, 0.5194] & 3.65\,s       & [0.5192, 0.5194] & 5.76\,s      \\
      15  & 0.5118 & 15.50\,s      & [0.5109, 0.5130] & 0.39\,s  & [0.5117, 0.5119] & 5.41\,s       & [0.5117, 0.5119] & 7.41\,s      \\
      20  & 0.5087 & 37\,min 10\,s & [0.5078, 0.5098] & 0.75\,s  & [0.5086, 0.5089] & 7.83\,s       & [0.5087, 0.5088] & 12.74\,s     \\
      50  & ---    & ---           & [0.5019, 0.5039] & 1.57\,s  & [0.5026, 0.5028] & 22.32\,s      & [0.5026, 0.5027] & 41.34\,s     \\
      100 & ---    & ---           & [0.5006, 0.5026] & 12.37\,s & [0.5010, 0.5012] & 1\,min 12\,s  & [0.5010, 0.5011] & 1\,min 44\,s \\
      200 & ---    & ---           & [0.5001, 0.5021] & 53.00\,s & [0.5003, 0.5006] & 6\,min 54\,s  & ---              & ---          \\
      300 & ---    & ---           & [0.5000, 0.5020] & 32.86\,s & [0.5002, 0.5004] & 17\,min 11\,s & ---              & ---          \\
      400 & ---    & ---           & [0.4999, 0.5020] & 20.15\,s & [0.5001, 0.5004] & 39\,min 57\,s & ---              & ---          \\
      \bottomrule
    \end{tabular}
  }
  \caption{
    Incompatibility depolarizing robustness of qubit projective measurements distributed in the Bloch sphere in a way inspired by the Fibonacci lattice.
    For $m=21$, the SDP approach was not able to find a solution.
  }
  \label{tab:hfib_incomp_rob}
\end{table}

\newpage

\subsubsection{Random non-projective POVMs}\label{sec:nonproj_qubit}
As we have previously mentioned, the method we propose is not limited to projective measurements, it can also be applied to generic POVMs.
In \cref{fig:povm-qubits}, we exhibit the incompatibility robustness of random sets of extremal qubit POVMs, generated with the function \code{random_povm} from the package \texttt{Ket.jl}~\cite{Ket.jl}, which is based on Ref.~\cite{heinossari2020random}, and the algorithm proposed in Ref.~\cite{dariano05} to decompose any given POVM into extremal ones.
We compare their incompatibility with sets of random projective qubit measurements, generated in a similar manner, and with the Fibonacci measurements, studied in \cref{subsec:examplesJM_qubits}.
For each value of the number of measurements, we sampled five sets of random POVMs and five sets of projective measurements, and the values exhibited in \cref{fig:povm-qubits} correspond to the average incompatibility robustness we obtained.

Our results suggest that projective measurements are typically much more incompatible than non-projective POVMs.
This is in alignment with the evidences shown in Ref.~\cite{bavaresco17} that qubit projective measurements are more incompatible than non-projective POVMs, and with the fact that the incompatibility robustness of the set of all qubit projective measurements is the same as the robustness of the set of all qubit general POVMs, shown in Refs.~\cite{renner2024, zhang2024}.

\begin{figure}[H]
  \centering
  \includegraphics[width=0.78\linewidth]{figs/povm-qubits.png}
  \caption{
    Incompatibility robustness of random extremal qubit POVMs, compared with random projective measurements and the Fibonacci measurements, studied in \cref{subsec:examplesJM_qubits}.
  }
  \label{fig:povm-qubits}
\end{figure}

\subsection{Qutrits}\label{sec:examples_JM_qutrits}
To illustrate our method for qutrits, we consider sets of projective measurements once more inspired by the Fibonacci lattice.
For every $m \in \mathbb{N}$, consider the set of trichotomic measurements (i.e., measurements with three outcomes) $\{M_{a \vert x}\}_{x = 1}^m$ where $M_{1 \vert x}$ is the projector onto the state $\ket{\psi_1}_x = \sqrt{1 - z_x^2}\cos(\theta_x)\ket{0} + \sqrt{1 - z_x^2}\sin(\theta_x)\ket{1} + z_xe^{i\phi_x}\ket{2}$, with $z_x = 1 - (x-1)/(m-1)$, $\theta_x = (x-1)\pi(\sqrt{5} - 1)$ and $\phi_x = \pi\frac{x-1}{m-1}$, $M_{2 \vert x}$ is the projector onto $\ket{\psi_2}_x = \sin(\theta_x)\ket{0} - \cos(\theta_x)\ket{1}$, and $M_{3 \vert x} = \1 - M_{1 \vert x} - M_{2 \vert x}$.
In \cref{fig:qutrit_measurements,tab:qutrit_fib_incomp_rob}, we exhibit the incompatibility robustness of these measurements for various values of $m$, computed using both the techniques we propose and the standard SDP approach, along with the respective execution times.
Notice that the bounds obtained with our techniques in this case are much looser than the ones obtained for qubit measurements (see \cref{subsec:examplesJM_qubits}).

\begin{figure}[H]
  \centering
  \begin{subfigure}{0.78\textwidth}
    \centering
    \includegraphics[width=\linewidth]{figs/qutrits_visibilities.png}
    \label{fig:qutrit_visibilities}
  \end{subfigure}
  \vspace{-10pt}
  \begin{subfigure}{0.78\textwidth}
    \centering
    \includegraphics[width=\linewidth]{figs/qutrits_runtimes.png}
    \label{fig:qutrit_runtimes}
  \end{subfigure}
  \vspace{-2pt}
  \caption{
    Comparison between results for the incompatibility robustness of qutrit projective measurements obtained with the standard SDP \eqref{eq:main_SDP} and with the linear program \eqref{eq:main_LP}.
    The measurements under study are inspired by the Fibonacci lattice.
    Raw data can be found in \cref{tab:qutrit_fib_incomp_rob}.
  }
  \label{fig:qutrit_measurements}
\end{figure}

\begin{table}[H]
  \centering
  \scalebox{0.84}{
    \begin{tabular}{@{}ccccccccc@{}}
      \toprule
      & \multicolumn{2}{c}{SDP} & \multicolumn{2}{c}{273-vertex polytope} & \multicolumn{2}{c}{751-vertex polytope} & \multicolumn{2}{c@{}}{2191-vertex polytope} \\
      \cmidrule(lr){2-3} \cmidrule(lr){4-5} \cmidrule(lr){6-7} \cmidrule(l){8-9}
      $m$ & $\eta$ & Runtime & $\eta$ & Runtime & $\eta$ & Runtime & $\eta$ & Runtime \\
      \midrule
      9   & 0.5166 & 23.22\,s      & [0.4855, 0.5837] & 0.30\,s  & [0.4940, 0.5592] & 0.56\,s      & [0.5029, 0.5547] & 2.23\,s      \\
      10  & 0.5070 & 1\,min 22\,s  & [0.4775, 0.5741] & 0.33\,s  & [0.4858, 0.5500] & 0.73\,s      & [0.4926, 0.5433] & 2.39\,s      \\
      11  & 0.5008 & 5\,min 35\,s  & [0.4714, 0.5668] & 0.40\,s  & [0.4802, 0.5436] & 1.09\,s      & [0.4866, 0.5367] & 2.78\,s      \\
      20  & ---    & ---           & [0.4558, 0.5480] & 1.04\,s  & [0.4638, 0.5251] & 2.55\,s      & [0.4685, 0.5167] & 9.50\,s      \\
      40  & ---    & ---           & [0.4490, 0.5399] & 2.72\,s  & [0.4553, 0.5154] & 5.14\,s      & [0.4591, 0.5064] & 19.44\,s     \\
      60  & ---    & ---           & [0.4480, 0.5386] & 4.62\,s  & [0.4531, 0.5129] & 10.22\,s     & [0.4568, 0.5038] & 35.07\,s     \\
      80  & ---    & ---           & [0.4472, 0.5377] & 7.13\,s  & [0.4527, 0.5125] & 14.03\,s     & [0.4561, 0.5030] & 2\,min 20\,s \\
      100 & ---    & ---           & [0.4466, 0.5370] & 36.39\,s & ---              & ---          & ---              & ---          \\
      \bottomrule
    \end{tabular}
  }
  \caption{
    Incompatibility depolarizing robustness of qutrit projective measurements inspired by the Fibonacci lattice.
    For $m=12$, the SDP approach was not able to find a solution.
  }
  \label{tab:qutrit_fib_incomp_rob}
\end{table}

\section{Applications to the steerability of quantum states}\label{sec:examples_steering}
In this section, we discuss how the linear program \eqref{eq:main_LP} can be used to decide whether a quantum state can or cannot lead to steerable assemblages.
Broadly speaking, the characterization of steering can be studied from a few different perspectives~\cite{Uola2020SteeringReview}.
In the first place, one can analyze whether correlations in the measurement statistics of the experiment can prove steerability (see, for example, Refs.~\cite{saunders10, cavalcanti09}).
Secondly, the starting point of the analysis can be a steering assemblage, which leads to the SDP characterization given by \cref{eq:SDP_steering}.
Furthermore, one may also aim at characterizing which quantum states can give rise to steering~\cite{quintino15, hirsch16, cavalcanti15, Nguyen2019LHSCriterion}.

The techniques proposed in this work can be immediately applied within the second approach mentioned above, since the linear program \eqref{eq:main_LP} can replace the SDP \eqref{eq:SDP_steering}.
These uses of the method, however, are very similar to the ones discussed in the last section.
Indeed, calculating the incompatibility robustness of a set of measurements is equivalent to calculating the steering robustness of the assemblage produced by applying those measurements in a maximally entangled state~\cite{quintino15, uola15}.
For this reason such examples will not be further discussed here.
Rather, in this section we study how the LP \eqref{eq:main_LP} is useful in the third of the aforementioned approaches to characterize steering.
That is, we analyze how the proposed LP can be used to determine whether a given quantum state can produce steerable correlations.

To set the nomenclature, we say that a bipartite quantum state $\rho_{AB}$ is \textit{steerable} if there exists a set of measurements $\{M_{a|x}\}$ such that the assemblage $\sigma_{a|x} = \Tr_A(M_{a|x} \otimes \1~\rho_{AB})$ does not admit an LHS model.
On the other hand, if for any set of measurements of a particular kind (e.g., projective, POVMs) the resulting assemblage admits an LHS model, the state is said to be \textit{unsteerable} with respect to the specified type of measurement.

Similarly to what we have done in previous sections, instead of simply attempting to decide whether a given quantum state is unsteerable or steerable, we aim at computing a kind of steering robustness at the level of states.
More precisely, given a bipartite state $\rho_{AB}$ of local dimension $d$, we consider the quantity
\begin{equation}\label{eq:def_state_steering_rob}
  \eta^*(\rho_{AB}) = \max \left\{\eta ~\big|~ \Lambda_\eta \otimes \1 (\rho_{AB}) = \eta \rho_{AB} + (1 - \eta) \frac{\1}{d} \otimes \rho_B \text{ is unsteerable}\right\},
\end{equation}
where $\rho_B = \Tr_A(\rho_{AB})$, and unsteerability is to be understood with respect to a certain type of measurements specified in each context.
Unlike the steering robustness of assemblages, the above quantity cannot be computed by an SDP.
In fact, the best available methods to study this problem are only able to compute lower and upper bounds on $\eta^*(\rho_{AB})$~\cite{hirsch15, cavalcanti15, Nguyen2019LHSCriterion}.

In the following, we show how the linear program \eqref{eq:main_LP} can be useful for this matter.
First, it can be straightforwardly used to compute upper bounds on $\eta^*(\rho_{AB})$ for states in any dimension, and with respect to arbitrary types of measurements.
Furthermore, we also show how it can be used to improve the method proposed in Refs.~\cite{hirsch15, cavalcanti15} to compute lower bounds on $\eta^*(\rho_{AB})$, also in arbitrary dimensions and for arbitrary types measurements.
To demonstrate how powerful these methods are, we compare their results for partially entangled two-qubit states with the bounds obtained from the method proposed in Ref.~\cite{Nguyen2019LHSCriterion}, which is currently the best available tool to study steerability of two-qubit states.

\subsection{Proving that a quantum state is steerable}
To begin with, in order to prove that a given bipartite state $\rho_{AB}$ is steerable, in principle one just has to find suitable measurements such that $\Tr_A(M_{a|x} \otimes \1~\rho_{AB})$ does not admit an LHS model.
Nevertheless, it is in general not clear how to find these measurements for any given quantum state.
In this sense, one sensible idea is to simply select a certain finite set of measurements, possibly at random, and test for the steerability of the resulting assemblage using the SDP \eqref{eq:main_SDP}.
However, since this SDP can only be used in practice for a small number of measurements, often the result is not accurate.

On the other hand, if the linear program \eqref{eq:main_LP} is used to test the steerability of the resulting assemblage, a much larger set of measurements can be chosen from the beginning.
Using a larger number of measurements, one is expected to better cover the set of measurements of interest, which subsequently leads to a more precise answer.

Therefore, to prove the steerability of a quantum state, or, more precisely, to compute an upper bound on $\eta^*(\rho_{AB})$, we propose the following method.
First, choose (possibly at random) a large set of measurements $\{M_{a|x}\}$.
Then, choose an outer approximation of the set of quantum states in the desired local dimension$\{\chi_\lambda\}$\footnote{In the sense that every quantum state in the corresponding dimension is contained in the convex hull of $\{\chi_\lambda\}$, see \cref{fig:out}.} and compute
\begin{subequations}\label{eq:LP_steerability}
  \begin{align}
    \tilde{\eta}(\rho_{AB}, \{M_{a|x}\},  \{\chi_\lambda\}) = \max \; &\eta \\
    \text{s.t.} \; &\Tr_A(M_{a|x} \otimes \1~\Lambda_\eta \otimes \1(\rho_{AB})) = \sum_{\lambda = 1}^n p(\lambda, a|x)\chi_\lambda, \quad \forall a,~x \label{eq:lp_steerability_lhs_constraint}\\
    &p(\lambda, a|x) \geq 0, \quad \forall \lambda,~a,~x\\
    &\sum_{a = 1}^k p(\lambda, a|x) = \sum_{a = 1}^k p(\lambda, a|x'), \quad \forall \lambda,~x,~x'.
  \end{align}
\end{subequations}
This guarantees that the state $\Lambda_\eta(\rho_{AB})$ is steerable for $\eta > \tilde{\eta}(\rho_{AB}, \{M_{a|x}\}, \{\chi_\lambda\})$.
That is, it holds that $\eta^*(\rho_{AB}) < \tilde{\eta}(\rho_{AB}, \{M_{a|x}\}, \{\chi_\lambda\})$.
The better the set $\{\chi_\lambda\}$ approximates the set of quantum states, and the more well-distributed the measurements $\{M_{a|x}\}$ among the set of measurements of interest, the better the computed upper bound is expected to be.

\subsection{Proving that a quantum state is unsteerable}\label{subsec:steerability}
Conversely, now consider the problem of proving that a quantum state is unsteerable with respect to \textit{all} measurements of a certain kind (e.g., projective measurements, dichotomic POVMs, etc.).
Perhaps the biggest difficulty in doing so is the fact that one has to deal with a continuous set of measurements, unlike what we have been considering so far.
Due to this difficulty, this problem has been firstly tackled for some symmetric classes of quantum states, for which explicit LHS models could be constructed from symmetry considerations~\cite{werner89, barrett02}.
However, later an algorithmic method was developed in order to prove the unsteerability of generic states~\cite{cavalcanti15, hirsch15}.

The linear program \eqref{eq:main_LP} is able to substantially improve this algorithm.
In order to describe how, let us briefly discuss the original algorithm's main idea.
For presentation purposes, we restrict the discussion to the case of two-qubit states under projective measurements, but it should be noted that similar ideas can be applied to states of any dimension and measurements of any kind.
One should begin by choosing a finite set of projective qubit measurements $\{M_{a|x}\}$ in the Bloch sphere with shrinking factor $\mu$.
Then, notice that the set of \textit{quasi}-measurements $\{\Lambda_{1/\mu}(M_{a \vert x})\}$ is an outer approximation of the Bloch sphere, in the sense that any other measurements therein belongs to its convex hull.
Therefore, if a two-qubit state $\rho_{AB}$ is such that the assemblage ${\Tr_A[\Lambda_{1/\mu}(M_{a|x}) \otimes \1~\rho_{AB}]}$ is unsteerable, it follows that the state $\rho_{AB}$ is unsteerable with respect to \textit{all} projective measurements.

The larger the shrinking factor of the set $\{M_{a|x}\}$, the more powerful the method \cite{hirsch15}.
Nevertheless, notice that a part of the algorithm consists in proving the unsteerability of an assemblage generated by such measurements, and in practice this was done using the SDP \eqref{eq:main_SDP}.
This poses a severe constraint on the number of measurements one could consider, which leads to fairly small shrinking factors.
In this context, instead of using the SDP \eqref{eq:main_SDP} to verify such unsteerability, we propose to use the linear program \eqref{eq:main_LP}.
This allows the method to be used with a much larger set of measurements, with significantly better shrinking factor.

In summary, the method we propose to prove the unsteerability of a quantum state, or, more precisely, to compute a lower bound on $\eta^*(\rho_{AB})$ of a given quantum state $\rho_{AB}$ goes as follows.
In the first place, choose a set of projective measurements $\{M_{a|x}\}$ on the Bloch sphere with shrinking factor $\mu$, and a set of states $\{\rho_\lambda\}$.
Then, compute the following linear program
\begin{subequations}\label{eq:LP_unsteerability}
  \begin{align}
    \tilde{\eta}(\rho_{AB}, \{M^{1/\mu}_{a \vert x}\}, \{\rho_\lambda\}) = \max \; &\eta \\
    \text{s.t.} \; &\Tr_A( M^{1/\mu}_{a|x} \otimes \1~\Lambda_\eta \otimes \1(\rho_{AB})) = \sum_{\lambda = 1}^n p(\lambda, a|x)\rho_\lambda, \quad \forall a,~x \label{eq:lp_unsteerability_lhs_constraint}\\
    &p(\lambda, a|x) \geq 0, \quad \forall \lambda,~a ,~x \\
    &\sum_{a = 1}^k p(\lambda, a|x) = \sum_{a = 1}^k p(\lambda, a|x'), \quad \forall \lambda,~x,~x',
  \end{align}
\end{subequations}
where $M^{1/\mu}_{a \vert x} = \Lambda_{1/\mu}(M_{a \vert x})$.
Then, it follows that the state $\Lambda_\eta(\rho_{AB})$ is unsteerable for $\eta \leq \tilde{\eta}(\rho, \{\Lambda_{1/\mu}(M_{a \vert x})\}, \{\rho_\lambda\})$.
Or, in other words, we have $\eta^*(\rho_{AB}) \geq \tilde{\eta}(\rho_{AB}, \{\Lambda_{1/\mu}(M_{a \vert x})\}, \{\rho_\lambda\})$.

\subsection{Conceptual comparison with existing methods}
The techniques described above lead to linear programs for computing lower and upper bounds for $\eta^*(\rho_{AB})$.
As we have mentioned, there are other methods available in the literature for computing the same quantities.
On the one hand, the method presented in Refs.~\cite{hirsch15, cavalcanti15} is the inspiration for the ones we propose, and in the previous sections we took it as the starting point for the presentation of our methods.
On the other hand, for two-qubit states under dichotomic measurements, the best known method to calculate these bounds also consists of a linear program which furthermore requires a polytope approximation of the Bloch sphere~\cite{Nguyen2019LHSCriterion}.
In this section, we discuss the conceptual relations between these three methods in further details.

Let us start by introducing a few definitions.
The \textit{set of steering outcomes} of a given state $\rho_{AB}$, denoted by $\Omega(\rho_{AB})$, is defined by
\begin{equation}
  \Omega(\rho_{AB}) = \{\Tr_A (M \otimes \1 ~\rho_{AB})~\vert~0 \leq M \leq \1\}.
\end{equation}
Additionally, given a probability distribution $\mu$ over the Bloch ball, we define its \textit{capacity} $\kappa(\mu)$ as
\begin{equation}
  \kappa(\mu) = \left\{\int d\mu(\sigma) g(\sigma) \sigma~\bigg\vert~0 \leq g(\sigma) \leq 1 ~\forall \sigma\right\},
\end{equation}
where the integral is carried over the Bloch ball.

By definition, a state $\rho_{AB}$ is unsteerable with respect to dichotomic measurements if there exists a probability distribution $\mu$ over the Bloch ball such that for every dichotomic measurement $x = \{M_{0 \vert x}, M_{1 \vert x}\}$ there are probability distributions $p(a \vert x, \sigma)$, such that
\begin{equation}
    \Tr_A (M_{a \vert x} \otimes \1 ~\rho_{AB}) = \int d\mu(\sigma) p(a \vert x, \sigma) \sigma,
\end{equation}
where $a = 0,1$ and the integral is once more carried over the Bloch ball.
That is, in terms of the definitions above, $\rho_{AB}$ is unsteerable with respect to dichotomic measurements if, and only if, there exists a probability distribution $\mu$ over the Bloch ball such that $\Omega(\rho_{AB}) \subset \kappa(\mu)$.

In a sense, every method to study the steerability of a quantum state under dichotomic measurements is tailored to verify the above inclusion \cite{nguyen2018}.
Firstly, the approach of Refs.~\cite{hirsch15, cavalcanti15} is to consider polytope approximations of the set $\Omega(\rho_{AB})$.
More precisely, to prove the unsteerability of $\rho_{AB}$, they construct an \textit{outer} polytope approximation of $\Omega(\rho_{AB})$, say, with $m$ vertices, and verify whether there exists a probability distribution $\mu$ over the Bloch ball such that each vertex of the outer polytope is included in $\kappa(\mu)$.
This can be done with a single SDP, whose size grows exponentially with $m$~\cite{hirsch15, cavalcanti15}.

Secondly, the method from Ref.~\cite{Nguyen2019LHSCriterion} instead considers polytope approximations of the sets $\kappa(\mu)$.
To show that $\rho_{AB}$ is unsteerable, the authors fix a set of $n$ states in the Bloch ball $\{\rho_\lambda\}$ and search for a probability distribution $\mu$ which only has nonzero values at the fixed states $\{\rho_\lambda\}$ and such that $\Omega(\rho_{AB}) \subset \kappa(\mu)$.
If $\rho_{AB}$ is a two-qubit state, and once the set of states $\{\rho_\lambda\}$ is fixed, this can be done with a single linear program, whose size grows with $n^3$~\cite{Nguyen2019LHSCriterion}.

Finally, in this work we consider \textit{both polytope approximations of} $\Omega(\rho_{AB})$ \textit{and polytope approximations of $\kappa(\mu)$}.
That is, to prove the unsteerability of $\rho_{AB}$, we fix an outer polytope approximation of $\Omega(\rho_{AB})$ with $m$ vertices, and we look for probability distributions $\mu$ restricted to a fixed set of $n$ states $\{\rho_\lambda\}$ such that $\kappa(\mu)$ includes all the $m$ vertices of the chosen polytope.
This results in a linear program whose size grows with the product $mn$.

In practice, the method developed in Ref.~\cite{Nguyen2019LHSCriterion} is better than the one from Refs.~\cite{hirsch15, cavalcanti15} for two-qubit quantum states.
In the following, we compare the former with the method proposed in this work.
In general, both methods provide quite similar bounds with similar computational resources, but while the method from Ref.~\cite{Nguyen2019LHSCriterion} seems to be advantageous for proving the unsteerability of quantum states, our method seems to be better for proving their steerability.

\subsection{Practical comparison with existing methods}
\subsubsection{Performance comparison}
To make a practical comparison between the methods we propose and the one from Ref.~\cite{Nguyen2019LHSCriterion} for generic two-qubit states, we computed lower and upper bounds on the steering robustness of 100 randomly generated steerable states using both techniques.
Such states were generated using the function \code{random_state} from the package \texttt{Ket.jl}~\cite{Ket.jl}, which uniformly samples a state according to the Hilbert-Schmidt metric, and a version of the linear program from Ref.~\cite{Nguyen2019LHSCriterion} to select only steerable states.
As the set of measurements to be used in the linear programs \eqref{eq:LP_steerability} and \eqref{eq:LP_unsteerability}, we considered the 406 measurements from Ref.~\cite{Designolle23Grothendieck}, and as the polytope approximation of the Bloch sphere, we used the same 614-vertex polytope considered in \cref{subsec:examplesJM_qubits}.
Meanwhile, for the linear program from Ref.~\cite{Nguyen2019LHSCriterion}, we used the 162-vertex polytope mentioned in \cref{tab:sloane_shr} as an approximation of the Bloch sphere.
The states we considered as well as the lower and upper bounds computed with each of the methods are available at \cite{gitcode}.

The measurements and polytopes just mentioned above were chosen in order to keep similar running times for both methods.
However, while the linear program from Ref.~\cite{Nguyen2019LHSCriterion} has roughly the same running time for every input state, we observe that our method's runtime varies significantly from state to state, ranging from a few seconds to a few minutes.
As we have mentioned before, we believe this is an inherent feature of our approach, which is related to the alignment between the polytope chosen to approximate the set of states, and the assemblage resulting from applying the chosen measurements on the input state.
Nevertheless, the time to compute the desired bounds for all the 100 considered states was roughly the same (around five hours) for both methods.

Interestingly, the bounds provided by our methods are quite similar to the ones obtained with the method from Ref.~\cite{Nguyen2019LHSCriterion}.
Regarding the lower bounds on the steering robustness, the average difference between the bounds computed by both methods is of order $10^{-4}$.
Meanwhile, for the upper bounds the average difference is of order $10^{-3}$.
This is a good indication of the quality of our methods, since to this day in the literature there is nothing comparable to the linear program in~\cite{Nguyen2019LHSCriterion}.

On the one hand, for 86 out of the 100 states, the method from Ref.~\cite{Nguyen2019LHSCriterion} provides better lower bounds on the steering robustness when compared to the linear program \eqref{eq:LP_unsteerability}.
On the other hand, for \textit{all} the studied states the upper bounds provided by the linear program \eqref{eq:LP_steerability} are better than the ones obtained with Ref.~\cite{Nguyen2019LHSCriterion}.
That is, the method from Ref.~\cite{Nguyen2019LHSCriterion} is generally better than ours to prove that quantum states are unsteerable, but the linear program \eqref{eq:LP_steerability} outperforms \cite{Nguyen2019LHSCriterion} in certifying the steerability of states.
Furthermore, the dual of the linear program \eqref{eq:LP_steerability} provides a steering inequality which can be violated by the studied state under the measurements previously chosen.
Thus, our method also provides a certificate of the steerability of the considered state, which is not directly granted by the method from~\cite{Nguyen2019LHSCriterion}.

\subsubsection{Flexibility over different noise models}
Another practical advantage of our approach is its flexibility with respect to different noise models on the states.
For a given two-qubit state $\rho_{AB}$, the linear program proposed in Ref.~\cite{Nguyen2019LHSCriterion} computes bounds on the maximum $\eta$ such that $\eta\rho_{AB} + (1-\eta)\frac{\1}{2} \otimes \rho_B$ is unsteerable, where $\rho_B = \Tr_A(\rho_{AB})$, as in \cref{eq:def_state_steering_rob}.
However, if one is interested in studying a different kind of noise model --- e.g.$~\eta\rho_{AB} + (1-\eta)\frac{\1}{4}$ --- one has to run the linear program from Ref.~\cite{Nguyen2019LHSCriterion} multiple times, using the bisection method.
Meanwhile, the methods we propose can be easily adapted to different noise models.
In particular, our method covers any noisy model where the noise parameter $\eta$ is linear in the given state $\rho_{AB}$.
It suffices to change \cref{eq:lp_steerability_lhs_constraint,eq:lp_unsteerability_lhs_constraint} according to the desired noise on the state, and a single run of each of the linear programs \eqref{eq:LP_steerability} and \eqref{eq:LP_unsteerability} provides tight bounds on the steering robustness with the considered noise.

\subsubsection{Higher dimensions and general POVMs}
The linear program from Ref.~\cite{Nguyen2019LHSCriterion} is limited to qudit-qubit states under dichotomic measurements, and cannot be straightforwardly extended to systems of higher dimension or general measurements (POVMs).
In comparison, the approach presented in this work has the advantage of being applicable to states of arbitrary dimension, and arbitrary POVMs.
Nonetheless, dealing with such general cases may be challenging, due to the necessity of finding good polytope approximations of the set of quantum states and measurements.
In this sense, with a realistic/reasonable execution time, the upper and lower bounds for higher dimensions may not be as tight as the ones we obtained for the two-qubit projective measurement scenario.
However, as presented in \cref{sec:examples_JM_qutrits} our methods do obtain non-trivial results for qutrits, and especially for proving the steerability of a state (see \cref{subsec:steerability}) our techniques are expected to perform well in the POVM case.
For instance, the results from \cref{sec:nonproj_qubit} can be viewed as steerability certificates for the two-qubit Werner state under POVMs.
Meanwhile, constructing LHS models for general POVMs may still be difficult, since our techniques require a good polytope approximation for the set of POVMs and the evaluation of its corresponding shrinking factor (see, e.g., Ref.~\cite{hirsch2018}).

\section{Conclusion}
In this work, we presented a method to quantify measurement incompatibility and steering whose complexity scales polynomially in the number of measurements, and which works for arbitrary POVMs in any dimension.
Such a scaling contrasts with standard techniques based on SDP, where the size of the optimization problem grows exponentially with the number of measurements.
More precisely, our approach trades off accuracy for efficiency, as its runtime also scales polynomially with the inverse of the desired precision $\epsilon^{-1}$, while the standard SDP approach scales logarithmically with $\epsilon^{-1}$.

In practice, while it is challenging to reach an arbitrarily small tolerance in high dimensions, in the case of qubits our methods provide remarkably tight bounds for the incompatibility robustness of considerably large sets of measurements with a very good performance.
For instance, we could evaluate the robustness of 400 projective qubit measurements with a four decimal digit precision in a minute timescale using a standard laptop, while previous methods could not even attempt to tackle scenarios with 25 measurements.
Similar performances can be achieved for sets of qubit POVMs.
For qutrit measurements, one can obtain non-trivial bounds on incompatibility robustnesses in scenarios that are intractable with the standard SDP, but which are not as tight as in the qubit case.
The Julia code we used for these and for all examples in this work is openly available and ready to use at \cite{gitcode}.

Additionally, our methods can be used to certify the steerability of quantum states, or in contrary, to construct LHS models for them under all possible local measurements.
Such techniques are applicable to states of any dimension, and their steerability can be studied with respect to generic POVMs.
In practice, for two-qubit states we found our results to be comparable with the state-of-the-art linear program developed in Ref.~\cite{Nguyen2019LHSCriterion}.
In our tests, our approach showed a better performance for certifying that a given state is steerable, and it had a performance similar to Ref.~\cite{Nguyen2019LHSCriterion} for constructing LHS models (slightly better in some cases, slightly worse in others).
We stress that the method from Ref.~\cite{Nguyen2019LHSCriterion} is restricted to qudit-qubit states, and it is tailored for a specific kind of noise on the state, thus becoming impractical if a different noise model is to be studied.
Meanwhile, our method is applicable to arbitrary linear noise model without change in performance, and works for arbitrary qudit-qudit states and general POVMs (but in some general cases, dealing with the polytope approximations may be challenging).
Finally, we believe other problems related to measurement incompatibility and steering may benefit from the methods developed in this work.

\section*{Acknowledgments}
We thank Ties-A.~Ohst and H.~Chau Nguyen for interesting discussions.
Work funded by QuantEdu France, a State aid managed by the French National Research Agency for France 2030 with the reference ANR-22-CMAS-0001.
MTQ is supported by the Agence Nationale de la Recherche (ANR) through the JCJC programme under grant number ANR-25-CE47-6396-01-HOQO-KS. 
Research reported in this paper was partially supported through the Research Campus Modal funded by the German Federal Ministry of Education and Research (fund numbers 05M14ZAM, 05M20ZBM) and the Deutsche Forschungsgemeinschaft (DFG) through the DFG Cluster of Excellence MATH+.

\nocite{apsrev42Control}

\bibliographystyle{0_MTQ_apsrev4-2_corrected}
\bibliography{LP_for_JM.bib}

\begin{thebibliography}{75}%
\makeatletter
\providecommand \@ifxundefined [1]{%
 \@ifx{#1\undefined}
}%
\providecommand \@ifnum [1]{%
 \ifnum #1\expandafter \@firstoftwo
 \else \expandafter \@secondoftwo
 \fi
}%
\providecommand \@ifx [1]{%
 \ifx #1\expandafter \@firstoftwo
 \else \expandafter \@secondoftwo
 \fi
}%
\providecommand \natexlab [1]{#1}%
\providecommand \enquote  [1]{``#1''}%
\providecommand \bibnamefont  [1]{#1}%
\providecommand \bibfnamefont [1]{#1}%
\providecommand \citenamefont [1]{#1}%
\providecommand \href@noop [0]{\@secondoftwo}%
\providecommand \href [0]{\begingroup \@sanitize@url \@href}%
\providecommand \@href[1]{\@@startlink{#1}\@@href}%
\providecommand \@@href[1]{\endgroup#1\@@endlink}%
\providecommand \@sanitize@url [0]{\catcode `\\12\catcode `\$12\catcode `\&12\catcode `\#12\catcode `\^12\catcode `\_12\catcode `\%12\relax}%
\providecommand \@@startlink[1]{}%
\providecommand \@@endlink[0]{}%
\providecommand \url  [0]{\begingroup\@sanitize@url \@url }%
\providecommand \@url [1]{\endgroup\@href {#1}{\urlprefix }}%
\providecommand \urlprefix  [0]{URL }%
\providecommand \Eprint [0]{\href }%
\providecommand \doibase [0]{https://doi.org/}%
\providecommand \selectlanguage [0]{\@gobble}%
\providecommand \bibinfo  [0]{\@secondoftwo}%
\providecommand \bibfield  [0]{\@secondoftwo}%
\providecommand \translation [1]{[#1]}%
\providecommand \BibitemOpen [0]{}%
\providecommand \bibitemStop [0]{}%
\providecommand \bibitemNoStop [0]{.\EOS\space}%
\providecommand \EOS [0]{\spacefactor3000\relax}%
\providecommand \BibitemShut  [1]{\csname bibitem#1\endcsname}%
\let\auto@bib@innerbib\@empty
\bibitem [{\citenamefont {G\"uhne}\ \emph {et~al.}(2023)\citenamefont {G\"uhne}, \citenamefont {Haapasalo}, \citenamefont {Kraft}, \citenamefont {Pellonp\"a\"a},\ and\ \citenamefont {Uola}}]{Guhne2023IncompReview}%
  \BibitemOpen
  \bibfield  {author} {\bibinfo {author} {\bibfnamefont {O.}~\bibnamefont {G\"uhne}}, \bibinfo {author} {\bibfnamefont {E.}~\bibnamefont {Haapasalo}}, \bibinfo {author} {\bibfnamefont {T.}~\bibnamefont {Kraft}}, \bibinfo {author} {\bibfnamefont {J.-P.}\ \bibnamefont {Pellonp\"a\"a}},\ and\ \bibinfo {author} {\bibfnamefont {R.}~\bibnamefont {Uola}},\ }\bibfield  {title} {\bibinfo {title} {Colloquium: Incompatible measurements in quantum information science},\ }\href {https://doi.org/10.1103/RevModPhys.95.011003} {\bibfield  {journal} {\bibinfo  {journal} {Rev. Mod. Phys.}\ }\textbf {\bibinfo {volume} {95}},\ \bibinfo {pages} {011003} (\bibinfo {year} {2023})},\ \Eprint {https://arxiv.org/abs/2112.06784}{arXiv:2112.06784 [quant-ph]}\BibitemShut {NoStop}%
\bibitem [{\citenamefont {Bell}(1964)}]{bell64}%
  \BibitemOpen
  \bibfield  {author} {\bibinfo {author} {\bibfnamefont {J.~S.}\ \bibnamefont {Bell}},\ }\bibfield  {title} {\bibinfo {title} {{On the Einstein-Poldolsky-Rosen paradox}},\ }\href {https://cds.cern.ch/record/111654/files/vol1p195-200_001.pdf} {\bibfield  {journal} {\bibinfo  {journal} {Physics}\ }\textbf {\bibinfo {volume} {1}},\ \bibinfo {pages} {195--200} (\bibinfo {year} {1964})}\BibitemShut {NoStop}%
\bibitem [{\citenamefont {{Brunner}}\ \emph {et~al.}(2014)\citenamefont {{Brunner}}, \citenamefont {{Cavalcanti}}, \citenamefont {{Pironio}}, \citenamefont {{Scarani}},\ and\ \citenamefont {{Wehner}}}]{brunner_review}%
  \BibitemOpen
  \bibfield  {author} {\bibinfo {author} {\bibfnamefont {N.}~\bibnamefont {{Brunner}}}, \bibinfo {author} {\bibfnamefont {D.}~\bibnamefont {{Cavalcanti}}}, \bibinfo {author} {\bibfnamefont {S.}~\bibnamefont {{Pironio}}}, \bibinfo {author} {\bibfnamefont {V.}~\bibnamefont {{Scarani}}},\ and\ \bibinfo {author} {\bibfnamefont {S.}~\bibnamefont {{Wehner}}},\ }\bibfield  {title} {\bibinfo {title} {{Bell nonlocality}},\ }\href {https://doi.org/10.1103/RevModPhys.86.419} {\bibfield  {journal} {\bibinfo  {journal} {Rev. Mod. Phys.}\ }\textbf {\bibinfo {volume} {86}},\ \bibinfo {pages} {419--478} (\bibinfo {year} {2014})},\ \Eprint {https://arxiv.org/abs/1303.2849}{arXiv:1303.2849 [quant-ph]}\BibitemShut {NoStop}%
\bibitem [{\citenamefont {Fine}(1982)}]{fine82}%
  \BibitemOpen
  \bibfield  {author} {\bibinfo {author} {\bibfnamefont {A.}~\bibnamefont {Fine}},\ }\bibfield  {title} {\bibinfo {title} {{Hidden Variables, Joint Probability, and the Bell Inequalities}},\ }\href {https://doi.org/10.1103/PhysRevLett.48.291} {\bibfield  {journal} {\bibinfo  {journal} {Phys. Rev. Lett.}\ }\textbf {\bibinfo {volume} {48}},\ \bibinfo {pages} {291--295} (\bibinfo {year} {1982})}\BibitemShut {NoStop}%
\bibitem [{\citenamefont {Quintino}\ \emph {et~al.}(2016)\citenamefont {Quintino}, \citenamefont {Bowles}, \citenamefont {Hirsch},\ and\ \citenamefont {Brunner}}]{quintino15Incompatible}%
  \BibitemOpen
  \bibfield  {author} {\bibinfo {author} {\bibfnamefont {M.~T.}\ \bibnamefont {Quintino}}, \bibinfo {author} {\bibfnamefont {J.}~\bibnamefont {Bowles}}, \bibinfo {author} {\bibfnamefont {F.}~\bibnamefont {Hirsch}},\ and\ \bibinfo {author} {\bibfnamefont {N.}~\bibnamefont {Brunner}},\ }\bibfield  {title} {\bibinfo {title} {Incompatible quantum measurements admitting a local-hidden-variable model},\ }\href {https://doi.org/10.1103/PhysRevA.93.052115} {\bibfield  {journal} {\bibinfo  {journal} {Phys. Rev. A}\ }\textbf {\bibinfo {volume} {93}},\ \bibinfo {pages} {052115} (\bibinfo {year} {2016})},\ \Eprint {https://arxiv.org/abs/1406.6976}{arXiv:1406.6976 [quant-ph]}\BibitemShut {NoStop}%
\bibitem [{\citenamefont {{Hirsch}}\ \emph {et~al.}(2018)\citenamefont {{Hirsch}}, \citenamefont {{Quintino}},\ and\ \citenamefont {{Brunner}}}]{hirsch2018}%
  \BibitemOpen
  \bibfield  {author} {\bibinfo {author} {\bibfnamefont {F.}~\bibnamefont {{Hirsch}}}, \bibinfo {author} {\bibfnamefont {M.~T.}\ \bibnamefont {{Quintino}}},\ and\ \bibinfo {author} {\bibfnamefont {N.}~\bibnamefont {{Brunner}}},\ }\bibfield  {title} {\bibinfo {title} {{Quantum measurement incompatibility does not imply Bell nonlocality}},\ }\href {https://doi.org/10.1103/PhysRevA.97.012129} {\bibfield  {journal} {\bibinfo  {journal} {Phys. Rev. A}\ }\textbf {\bibinfo {volume} {97}},\ \bibinfo {eid} {012129} (\bibinfo {year} {2018})},\ \Eprint {https://arxiv.org/abs/1707.06960}{arXiv:1707.06960 [quant-ph]}\BibitemShut {NoStop}%
\bibitem [{\citenamefont {{Bene}}\ and\ \citenamefont {{V{\'e}rtesi}}(2018)}]{bene2018}%
  \BibitemOpen
  \bibfield  {author} {\bibinfo {author} {\bibfnamefont {E.}~\bibnamefont {{Bene}}}\ and\ \bibinfo {author} {\bibfnamefont {T.}~\bibnamefont {{V{\'e}rtesi}}},\ }\bibfield  {title} {\bibinfo {title} {{Measurement incompatibility does not give rise to Bell violation in general}},\ }\href {https://doi.org/10.1088/1367-2630/aa9ca3} {\bibfield  {journal} {\bibinfo  {journal} {New J. Phys.}\ }\textbf {\bibinfo {volume} {20}},\ \bibinfo {eid} {013021} (\bibinfo {year} {2018})},\ \Eprint {https://arxiv.org/abs/1705.10069}{arXiv:1705.10069 [quant-ph]}\BibitemShut {NoStop}%
\bibitem [{\citenamefont {Pl\'avala}\ \emph {et~al.}(2025)\citenamefont {Pl\'avala}, \citenamefont {G\"uhne},\ and\ \citenamefont {Quintino}}]{Plavala2024incomp}%
  \BibitemOpen
  \bibfield  {author} {\bibinfo {author} {\bibfnamefont {M.}~\bibnamefont {Pl\'avala}}, \bibinfo {author} {\bibfnamefont {O.}~\bibnamefont {G\"uhne}},\ and\ \bibinfo {author} {\bibfnamefont {M.~T.}\ \bibnamefont {Quintino}},\ }\bibfield  {title} {\bibinfo {title} {{All Incompatible Measurements on Qubits Lead to Multiparticle {Bell} Nonlocality}},\ }\href {https://doi.org/10.1103/PhysRevLett.134.200201} {\bibfield  {journal} {\bibinfo  {journal} {Phys. Rev. Lett.}\ }\textbf {\bibinfo {volume} {134}},\ \bibinfo {pages} {200201} (\bibinfo {year} {2025})},\ \Eprint {https://arxiv.org/abs/2403.10564}{arXiv:2403.10564 [quant-ph]}\BibitemShut {NoStop}%
\bibitem [{\citenamefont {{Quintino}}\ \emph {et~al.}(2014)\citenamefont {{Quintino}}, \citenamefont {{V{\'e}rtesi}},\ and\ \citenamefont {{Brunner}}}]{quintino14}%
  \BibitemOpen
  \bibfield  {author} {\bibinfo {author} {\bibfnamefont {M.~T.}\ \bibnamefont {{Quintino}}}, \bibinfo {author} {\bibfnamefont {T.}~\bibnamefont {{V{\'e}rtesi}}},\ and\ \bibinfo {author} {\bibfnamefont {N.}~\bibnamefont {{Brunner}}},\ }\bibfield  {title} {\bibinfo {title} {{Joint Measurability, Einstein-Podolsky-Rosen Steering, and Bell Nonlocality}},\ }\href {https://doi.org/10.1103/PhysRevLett.113.160402} {\bibfield  {journal} {\bibinfo  {journal} {Phys. Rev. Lett.}\ }\textbf {\bibinfo {volume} {113}},\ \bibinfo {eid} {160402} (\bibinfo {year} {2014})},\ \Eprint {https://arxiv.org/abs/1406.6976}{arXiv:1406.6976 [quant-ph]}\BibitemShut {NoStop}%
\bibitem [{\citenamefont {{Uola}}\ \emph {et~al.}(2014)\citenamefont {{Uola}}, \citenamefont {{Moroder}},\ and\ \citenamefont {{G{\"u}hne}}}]{uola14}%
  \BibitemOpen
  \bibfield  {author} {\bibinfo {author} {\bibfnamefont {R.}~\bibnamefont {{Uola}}}, \bibinfo {author} {\bibfnamefont {T.}~\bibnamefont {{Moroder}}},\ and\ \bibinfo {author} {\bibfnamefont {O.}~\bibnamefont {{G{\"u}hne}}},\ }\bibfield  {title} {\bibinfo {title} {{Joint Measurability of Generalized Measurements Implies Classicality}},\ }\href {https://doi.org/10.1103/PhysRevLett.113.160403} {\bibfield  {journal} {\bibinfo  {journal} {Phys. Rev. Lett.}\ }\textbf {\bibinfo {volume} {113}},\ \bibinfo {eid} {160403} (\bibinfo {year} {2014})},\ \Eprint {https://arxiv.org/abs/1407.2224}{arXiv:1407.2224 [quant-ph]}\BibitemShut {NoStop}%
\bibitem [{\citenamefont {{Uola}}\ \emph {et~al.}(2020)\citenamefont {{Uola}}, \citenamefont {{Costa}}, \citenamefont {{Nguyen}},\ and\ \citenamefont {{G{\"u}hne}}}]{Uola2020SteeringReview}%
  \BibitemOpen
  \bibfield  {author} {\bibinfo {author} {\bibfnamefont {R.}~\bibnamefont {{Uola}}}, \bibinfo {author} {\bibfnamefont {A.~C.~S.}\ \bibnamefont {{Costa}}}, \bibinfo {author} {\bibfnamefont {H.~C.}\ \bibnamefont {{Nguyen}}},\ and\ \bibinfo {author} {\bibfnamefont {O.}~\bibnamefont {{G{\"u}hne}}},\ }\bibfield  {title} {\bibinfo {title} {{Quantum steering}},\ }\href {https://doi.org/10.1103/RevModPhys.92.015001} {\bibfield  {journal} {\bibinfo  {journal} {Rev. Mod. Phys.}\ }\textbf {\bibinfo {volume} {92}},\ \bibinfo {eid} {015001} (\bibinfo {year} {2020})},\ \Eprint {https://arxiv.org/abs/1903.06663}{arXiv:1903.06663 [quant-ph]}\BibitemShut {NoStop}%
\bibitem [{\citenamefont {{Buscemi}}\ \emph {et~al.}(2020)\citenamefont {{Buscemi}}, \citenamefont {{Chitambar}},\ and\ \citenamefont {{Zhou}}}]{Buscemi2020PAM}%
  \BibitemOpen
  \bibfield  {author} {\bibinfo {author} {\bibfnamefont {F.}~\bibnamefont {{Buscemi}}}, \bibinfo {author} {\bibfnamefont {E.}~\bibnamefont {{Chitambar}}},\ and\ \bibinfo {author} {\bibfnamefont {W.}~\bibnamefont {{Zhou}}},\ }\bibfield  {title} {\bibinfo {title} {{Complete Resource Theory of Quantum Incompatibility as Quantum Programmability}},\ }\href {https://doi.org/10.1103/PhysRevLett.124.120401} {\bibfield  {journal} {\bibinfo  {journal} {Phys. Rev. Lett.}\ }\textbf {\bibinfo {volume} {124}},\ \bibinfo {eid} {120401} (\bibinfo {year} {2020})},\ \Eprint {https://arxiv.org/abs/1908.11274}{arXiv:1908.11274 [quant-ph]}\BibitemShut {NoStop}%
\bibitem [{\citenamefont {{Carmeli}}\ \emph {et~al.}(2019)\citenamefont {{Carmeli}}, \citenamefont {{Heinosaari}},\ and\ \citenamefont {{Toigo}}}]{Carmeli2019PAM}%
  \BibitemOpen
  \bibfield  {author} {\bibinfo {author} {\bibfnamefont {C.}~\bibnamefont {{Carmeli}}}, \bibinfo {author} {\bibfnamefont {T.}~\bibnamefont {{Heinosaari}}},\ and\ \bibinfo {author} {\bibfnamefont {A.}~\bibnamefont {{Toigo}}},\ }\bibfield  {title} {\bibinfo {title} {{Quantum Incompatibility Witnesses}},\ }\href {https://doi.org/10.1103/PhysRevLett.122.130402} {\bibfield  {journal} {\bibinfo  {journal} {Phys. Rev. Lett.}\ }\textbf {\bibinfo {volume} {122}},\ \bibinfo {eid} {130402} (\bibinfo {year} {2019})},\ \Eprint {https://arxiv.org/abs/1812.02985}{arXiv:1812.02985 [quant-ph]}\BibitemShut {NoStop}%
\bibitem [{\citenamefont {{Guerini}}\ \emph {et~al.}(2019)\citenamefont {{Guerini}}, \citenamefont {{Quintino}},\ and\ \citenamefont {{Aolita}}}]{Guerini2019PAM}%
  \BibitemOpen
  \bibfield  {author} {\bibinfo {author} {\bibfnamefont {L.}~\bibnamefont {{Guerini}}}, \bibinfo {author} {\bibfnamefont {M.~T.}\ \bibnamefont {{Quintino}}},\ and\ \bibinfo {author} {\bibfnamefont {L.}~\bibnamefont {{Aolita}}},\ }\bibfield  {title} {\bibinfo {title} {{Distributed sampling, quantum communication witnesses, and measurement incompatibility}},\ }\href {https://doi.org/10.1103/PhysRevA.100.042308} {\bibfield  {journal} {\bibinfo  {journal} {Phys. Rev. A}\ }\textbf {\bibinfo {volume} {100}},\ \bibinfo {eid} {042308} (\bibinfo {year} {2019})},\ \Eprint {https://arxiv.org/abs/1904.08435}{arXiv:1904.08435 [quant-ph]}\BibitemShut {NoStop}%
\bibitem [{\citenamefont {{Ac{\'{\i}}n}}\ \emph {et~al.}(2007)\citenamefont {{Ac{\'{\i}}n}}, \citenamefont {{Brunner}}, \citenamefont {{Gisin}}, \citenamefont {{Massar}}, \citenamefont {{Pironio}},\ and\ \citenamefont {{Scarani}}}]{acin07}%
  \BibitemOpen
  \bibfield  {author} {\bibinfo {author} {\bibfnamefont {A.}~\bibnamefont {{Ac{\'{\i}}n}}}, \bibinfo {author} {\bibfnamefont {N.}~\bibnamefont {{Brunner}}}, \bibinfo {author} {\bibfnamefont {N.}~\bibnamefont {{Gisin}}}, \bibinfo {author} {\bibfnamefont {S.}~\bibnamefont {{Massar}}}, \bibinfo {author} {\bibfnamefont {S.}~\bibnamefont {{Pironio}}},\ and\ \bibinfo {author} {\bibfnamefont {V.}~\bibnamefont {{Scarani}}},\ }\bibfield  {title} {\bibinfo {title} {{Device-Independent Security of Quantum Cryptography against Collective Attacks}},\ }\href {https://doi.org/10.1103/PhysRevLett.98.230501} {\bibfield  {journal} {\bibinfo  {journal} {Phys. Rev. Lett.}\ }\textbf {\bibinfo {volume} {98}},\ \bibinfo {eid} {230501} (\bibinfo {year} {2007})},\ \Eprint {https://arxiv.org/abs/1705.10069}{arXiv:1705.10069 [quant-ph]}\BibitemShut {NoStop}%
\bibitem [{\citenamefont {{Skrzypczyk}}\ \emph {et~al.}(2019)\citenamefont {{Skrzypczyk}}, \citenamefont {{{\v{S}}upi{\'c}}},\ and\ \citenamefont {{Cavalcanti}}}]{Paul2019PAM}%
  \BibitemOpen
  \bibfield  {author} {\bibinfo {author} {\bibfnamefont {P.}~\bibnamefont {{Skrzypczyk}}}, \bibinfo {author} {\bibfnamefont {I.}~\bibnamefont {{{\v{S}}upi{\'c}}}},\ and\ \bibinfo {author} {\bibfnamefont {D.}~\bibnamefont {{Cavalcanti}}},\ }\bibfield  {title} {\bibinfo {title} {{All Sets of Incompatible Measurements give an Advantage in Quantum State Discrimination}},\ }\href {https://doi.org/10.1103/PhysRevLett.122.130403} {\bibfield  {journal} {\bibinfo  {journal} {Phys. Rev. Lett.}\ }\textbf {\bibinfo {volume} {122}},\ \bibinfo {eid} {130403} (\bibinfo {year} {2019})},\ \Eprint {https://arxiv.org/abs/1901.00816}{arXiv:1901.00816 [quant-ph]}\BibitemShut {NoStop}%
\bibitem [{\citenamefont {{Carmeli}}\ \emph {et~al.}(2018)\citenamefont {{Carmeli}}, \citenamefont {{Heinosaari}},\ and\ \citenamefont {{Toigo}}}]{carmeli2018}%
  \BibitemOpen
  \bibfield  {author} {\bibinfo {author} {\bibfnamefont {C.}~\bibnamefont {{Carmeli}}}, \bibinfo {author} {\bibfnamefont {T.}~\bibnamefont {{Heinosaari}}},\ and\ \bibinfo {author} {\bibfnamefont {A.}~\bibnamefont {{Toigo}}},\ }\bibfield  {title} {\bibinfo {title} {{State discrimination with postmeasurement information and incompatibility of quantum measurements}},\ }\href {https://doi.org/10.1103/PhysRevA.98.012126} {\bibfield  {journal} {\bibinfo  {journal} {Phys. Rev. A}\ }\textbf {\bibinfo {volume} {98}},\ \bibinfo {eid} {012126} (\bibinfo {year} {2018})},\ \Eprint {https://arxiv.org/abs/1804.09693}{arXiv:1804.09693 [quant-ph]}\BibitemShut {NoStop}%
\bibitem [{\citenamefont {Schumacher}\ and\ \citenamefont {Westmoreland}(2010)}]{Schumacher2010textbook}%
  \BibitemOpen
  \bibfield  {author} {\bibinfo {author} {\bibfnamefont {B.}~\bibnamefont {Schumacher}}\ and\ \bibinfo {author} {\bibfnamefont {M.}~\bibnamefont {Westmoreland}},\ }\href {https://doi.org/10.1017/CBO9780511814006} {\bibinfo {title} {Quantum Processes Systems, and Information}}\ (\bibinfo  {publisher} {Cambridge University Press},\ \bibinfo {year} {2010})\BibitemShut {NoStop}%
\bibitem [{\citenamefont {Busch}(1986)}]{busch1986}%
  \BibitemOpen
  \bibfield  {author} {\bibinfo {author} {\bibfnamefont {P.}~\bibnamefont {Busch}},\ }\bibfield  {title} {\bibinfo {title} {Unsharp reality and joint measurements for spin observables},\ }\href {https://doi.org/10.1103/PhysRevD.33.2253} {\bibfield  {journal} {\bibinfo  {journal} {Phys. Rev. D}\ }\textbf {\bibinfo {volume} {33}},\ \bibinfo {pages} {2253--2261} (\bibinfo {year} {1986})}\BibitemShut {NoStop}%
\bibitem [{\citenamefont {{Yu}}\ \emph {et~al.}(2008)\citenamefont {{Yu}}, \citenamefont {{Liu}}, \citenamefont {{Li}},\ and\ \citenamefont {{Oh}}}]{yu2008biased}%
  \BibitemOpen
  \bibfield  {author} {\bibinfo {author} {\bibfnamefont {S.}~\bibnamefont {{Yu}}}, \bibinfo {author} {\bibfnamefont {N.}~\bibnamefont {{Liu}}}, \bibinfo {author} {\bibfnamefont {L.}~\bibnamefont {{Li}}},\ and\ \bibinfo {author} {\bibfnamefont {C.~H.}\ \bibnamefont {{Oh}}},\ }\bibfield  {title} {\bibinfo {title} {{Joint measurement of two unsharp observables of a qubit}},\ }\href {https://arxiv.org/abs/0805.1538} {\bibfield  {journal} {\bibinfo  {journal} {arXiv:0805.1538}\ } (\bibinfo {year} {2008})}\BibitemShut {NoStop}%
\bibitem [{\citenamefont {{Busch}}\ and\ \citenamefont {{Schmidt}}(2010)}]{busch2010biased}%
  \BibitemOpen
  \bibfield  {author} {\bibinfo {author} {\bibfnamefont {P.}~\bibnamefont {{Busch}}}\ and\ \bibinfo {author} {\bibfnamefont {H.-J.}\ \bibnamefont {{Schmidt}}},\ }\bibfield  {title} {\bibinfo {title} {{Coexistence of qubit effects}},\ }\href {https://doi.org/10.1007/s11128-009-0109-x} {\bibfield  {journal} {\bibinfo  {journal} {Quantum Information Processing}\ }\textbf {\bibinfo {volume} {9}},\ \bibinfo {pages} {143--169} (\bibinfo {year} {2010})},\ \Eprint {https://arxiv.org/abs/0802.4167}{arXiv:0802.4167 [quant-ph]}\BibitemShut {NoStop}%
\bibitem [{\citenamefont {{Stano}}\ \emph {et~al.}(2008)\citenamefont {{Stano}}, \citenamefont {{Reitzner}},\ and\ \citenamefont {{Heinosaari}}}]{stano2010biased}%
  \BibitemOpen
  \bibfield  {author} {\bibinfo {author} {\bibfnamefont {P.}~\bibnamefont {{Stano}}}, \bibinfo {author} {\bibfnamefont {D.}~\bibnamefont {{Reitzner}}},\ and\ \bibinfo {author} {\bibfnamefont {T.}~\bibnamefont {{Heinosaari}}},\ }\bibfield  {title} {\bibinfo {title} {{Coexistence of qubit effects}},\ }\href {https://doi.org/10.1103/PhysRevA.78.012315} {\bibfield  {journal} {\bibinfo  {journal} {Phys. Rev. A}\ }\textbf {\bibinfo {volume} {78}},\ \bibinfo {eid} {012315} (\bibinfo {year} {2008})},\ \Eprint {https://arxiv.org/abs/0802.4248}{arXiv:0802.4248 [quant-ph]}\BibitemShut {NoStop}%
\bibitem [{\citenamefont {{Pal}}\ and\ \citenamefont {{Ghosh}}(2011)}]{pal2011}%
  \BibitemOpen
  \bibfield  {author} {\bibinfo {author} {\bibfnamefont {R.}~\bibnamefont {{Pal}}}\ and\ \bibinfo {author} {\bibfnamefont {S.}~\bibnamefont {{Ghosh}}},\ }\bibfield  {title} {\bibinfo {title} {{Approximate joint measurement of qubit observables through an Arthur-Kelly model}},\ }\href {https://doi.org/10.1088/1751-8113/44/48/485303} {\bibfield  {journal} {\bibinfo  {journal} {J. Phys. A: Math. Theor.}\ }\textbf {\bibinfo {volume} {44}},\ \bibinfo {eid} {485303} (\bibinfo {year} {2011})},\ \Eprint {https://arxiv.org/abs/1010.2878}{arXiv:1010.2878 [quant-ph]}\BibitemShut {NoStop}%
\bibitem [{\citenamefont {{Yu}}\ and\ \citenamefont {{Oh}}(2013)}]{yu2013}%
  \BibitemOpen
  \bibfield  {author} {\bibinfo {author} {\bibfnamefont {S.}~\bibnamefont {{Yu}}}\ and\ \bibinfo {author} {\bibfnamefont {C.~H.}\ \bibnamefont {{Oh}}},\ }\bibfield  {title} {\bibinfo {title} {{Quantum contextuality and joint measurement of three observables of a qubit}},\ }\href {https://arxiv.org/abs/1312.6470} {\bibfield  {journal} {\bibinfo  {journal} {arXiv:1312.6470}\ } (\bibinfo {year} {2013})}\BibitemShut {NoStop}%
\bibitem [{\citenamefont {{Grinko}}\ and\ \citenamefont {{Uola}}(2025)}]{grinko2024}%
  \BibitemOpen
  \bibfield  {author} {\bibinfo {author} {\bibfnamefont {D.}~\bibnamefont {{Grinko}}}\ and\ \bibinfo {author} {\bibfnamefont {R.}~\bibnamefont {{Uola}}},\ }\bibfield  {title} {\bibinfo {title} {{Compatibility of Binary Qubit Measurements}},\ }\href {https://doi.org/10.1103/vv1h-5mf9} {\bibfield  {journal} {\bibinfo  {journal} {Phys. Rev. Lett.}\ }\textbf {\bibinfo {volume} {135}},\ \bibinfo {eid} {200201} (\bibinfo {year} {2025})},\ \Eprint {https://arxiv.org/abs/2407.07711}{arXiv:2407.07711 [quant-ph]}\BibitemShut {NoStop}%
\bibitem [{\citenamefont {{Wolf}}\ \emph {et~al.}(2009)\citenamefont {{Wolf}}, \citenamefont {{Perez-Garcia}},\ and\ \citenamefont {{Fernandez}}}]{wolf09}%
  \BibitemOpen
  \bibfield  {author} {\bibinfo {author} {\bibfnamefont {M.~M.}\ \bibnamefont {{Wolf}}}, \bibinfo {author} {\bibfnamefont {D.}~\bibnamefont {{Perez-Garcia}}},\ and\ \bibinfo {author} {\bibfnamefont {C.}~\bibnamefont {{Fernandez}}},\ }\bibfield  {title} {\bibinfo {title} {{Measurements Incompatible in Quantum Theory Cannot Be Measured Jointly in Any Other No-Signaling Theory}},\ }\href {https://doi.org/10.1103/PhysRevLett.103.230402} {\bibfield  {journal} {\bibinfo  {journal} {Phys. Rev. Lett.}\ }\textbf {\bibinfo {volume} {103}},\ \bibinfo {eid} {230402} (\bibinfo {year} {2009})},\ \Eprint {https://arxiv.org/abs/0905.2998}{arXiv:0905.2998 [quant-ph]}\BibitemShut {NoStop}%
\bibitem [{\citenamefont {{Cavalcanti}}\ and\ \citenamefont {{Skrzypczyk}}(2016)}]{cavalcanti2016quantifiers}%
  \BibitemOpen
  \bibfield  {author} {\bibinfo {author} {\bibfnamefont {D.}~\bibnamefont {{Cavalcanti}}}\ and\ \bibinfo {author} {\bibfnamefont {P.}~\bibnamefont {{Skrzypczyk}}},\ }\bibfield  {title} {\bibinfo {title} {{Quantitative relations between measurement incompatibility, quantum steering, and nonlocality}},\ }\href {https://doi.org/10.1103/PhysRevA.93.052112} {\bibfield  {journal} {\bibinfo  {journal} {Phys. Rev. A}\ }\textbf {\bibinfo {volume} {93}},\ \bibinfo {eid} {052112} (\bibinfo {year} {2016})},\ \Eprint {https://arxiv.org/abs/1601.07450}{arXiv:1601.07450 [quant-ph]}\BibitemShut {NoStop}%
\bibitem [{\citenamefont {{Designolle}}\ \emph {et~al.}(2019)\citenamefont {{Designolle}}, \citenamefont {{Farkas}},\ and\ \citenamefont {{Kaniewski}}}]{designolle2019quantifiers}%
  \BibitemOpen
  \bibfield  {author} {\bibinfo {author} {\bibfnamefont {S.}~\bibnamefont {{Designolle}}}, \bibinfo {author} {\bibfnamefont {M.}~\bibnamefont {{Farkas}}},\ and\ \bibinfo {author} {\bibfnamefont {J.}~\bibnamefont {{Kaniewski}}},\ }\bibfield  {title} {\bibinfo {title} {{Incompatibility robustness of quantum measurements: a unified framework}},\ }\href {https://doi.org/10.1088/1367-2630/ab5020} {\bibfield  {journal} {\bibinfo  {journal} {New J. Phys.}\ }\textbf {\bibinfo {volume} {21}},\ \bibinfo {eid} {113053} (\bibinfo {year} {2019})},\ \Eprint {https://arxiv.org/abs/1906.00448}{arXiv:1906.00448 [quant-ph]}\BibitemShut {NoStop}%
\bibitem [{\citenamefont {Fawzi}(2021)}]{Fawzi18}%
  \BibitemOpen
  \bibfield  {author} {\bibinfo {author} {\bibfnamefont {H.}~\bibnamefont {Fawzi}},\ }\bibfield  {title} {\bibinfo {title} {{On Polyhedral Approximations of the Positive Semidefinite Cone}},\ }\href {https://doi.org/10.1287/moor.2020.1077} {\bibfield  {journal} {\bibinfo  {journal} {Math. Oper. Res.}\ }\textbf {\bibinfo {volume} {46}},\ \bibinfo {pages} {1479--1489} (\bibinfo {year} {2021})},\ \Eprint {https://arxiv.org/abs/2003.00785}{arXiv:2003.00785}\BibitemShut {NoStop}%
\bibitem [{\citenamefont {{Hirsch}}\ \emph {et~al.}(2016)\citenamefont {{Hirsch}}, \citenamefont {{Quintino}}, \citenamefont {{V{\'e}rtesi}}, \citenamefont {{Pusey}},\ and\ \citenamefont {{Brunner}}}]{hirsch15}%
  \BibitemOpen
  \bibfield  {author} {\bibinfo {author} {\bibfnamefont {F.}~\bibnamefont {{Hirsch}}}, \bibinfo {author} {\bibfnamefont {M.~T.}\ \bibnamefont {{Quintino}}}, \bibinfo {author} {\bibfnamefont {T.}~\bibnamefont {{V{\'e}rtesi}}}, \bibinfo {author} {\bibfnamefont {M.~F.}\ \bibnamefont {{Pusey}}},\ and\ \bibinfo {author} {\bibfnamefont {N.}~\bibnamefont {{Brunner}}},\ }\bibfield  {title} {\bibinfo {title} {{Algorithmic Construction of Local Hidden Variable Models for Entangled Quantum States}},\ }\href {https://doi.org/10.1103/PhysRevLett.117.190402} {\bibfield  {journal} {\bibinfo  {journal} {Phys. Rev. Lett.}\ }\textbf {\bibinfo {volume} {117}},\ \bibinfo {eid} {190402} (\bibinfo {year} {2016})},\ \Eprint {https://arxiv.org/abs/1512.00262}{arXiv:1512.00262 [quant-ph]}\BibitemShut {NoStop}%
\bibitem [{\citenamefont {{Cavalcanti}}\ \emph {et~al.}(2016)\citenamefont {{Cavalcanti}}, \citenamefont {{Guerini}}, \citenamefont {{Rabelo}},\ and\ \citenamefont {{Skrzypczyk}}}]{cavalcanti15}%
  \BibitemOpen
  \bibfield  {author} {\bibinfo {author} {\bibfnamefont {D.}~\bibnamefont {{Cavalcanti}}}, \bibinfo {author} {\bibfnamefont {L.}~\bibnamefont {{Guerini}}}, \bibinfo {author} {\bibfnamefont {R.}~\bibnamefont {{Rabelo}}},\ and\ \bibinfo {author} {\bibfnamefont {P.}~\bibnamefont {{Skrzypczyk}}},\ }\bibfield  {title} {\bibinfo {title} {{General Method for Constructing Local Hidden Variable Models for Entangled Quantum States}},\ }\href {https://doi.org/10.1103/PhysRevLett.117.190401} {\bibfield  {journal} {\bibinfo  {journal} {Phys. Rev. Lett.}\ }\textbf {\bibinfo {volume} {117}},\ \bibinfo {eid} {190401} (\bibinfo {year} {2016})},\ \Eprint {https://arxiv.org/abs/1512.00277}{arXiv:1512.00277 [quant-ph]}\BibitemShut {NoStop}%
\bibitem [{\citenamefont {Nguyen}\ \emph {et~al.}(2019)\citenamefont {Nguyen}, \citenamefont {Nguyen},\ and\ \citenamefont {G\"uhne}}]{Nguyen2019LHSCriterion}%
  \BibitemOpen
  \bibfield  {author} {\bibinfo {author} {\bibfnamefont {H.~C.}\ \bibnamefont {Nguyen}}, \bibinfo {author} {\bibfnamefont {H.-V.}\ \bibnamefont {Nguyen}},\ and\ \bibinfo {author} {\bibfnamefont {O.}~\bibnamefont {G\"uhne}},\ }\bibfield  {title} {\bibinfo {title} {{Geometry of Einstein-Podolsky-Rosen Correlations}},\ }\href {https://doi.org/10.1103/PhysRevLett.122.240401} {\bibfield  {journal} {\bibinfo  {journal} {Phys. Rev. Lett.}\ }\textbf {\bibinfo {volume} {122}},\ \bibinfo {pages} {240401} (\bibinfo {year} {2019})},\ \Eprint {https://arxiv.org/abs/1808.09349}{arXiv:1808.09349 [quant-ph]}\BibitemShut {NoStop}%
\bibitem [{\citenamefont {{Ali}}\ \emph {et~al.}(2009)\citenamefont {{Ali}}, \citenamefont {{Carmeli}}, \citenamefont {{Heinosaari}},\ and\ \citenamefont {{Toigo}}}]{ali2009}%
  \BibitemOpen
  \bibfield  {author} {\bibinfo {author} {\bibfnamefont {S.~T.}\ \bibnamefont {{Ali}}}, \bibinfo {author} {\bibfnamefont {C.}~\bibnamefont {{Carmeli}}}, \bibinfo {author} {\bibfnamefont {T.}~\bibnamefont {{Heinosaari}}},\ and\ \bibinfo {author} {\bibfnamefont {A.}~\bibnamefont {{Toigo}}},\ }\bibfield  {title} {\bibinfo {title} {{Commutative POVMs and Fuzzy Observables}},\ }\href {https://doi.org/10.1007/s10701-009-9292-y} {\bibfield  {journal} {\bibinfo  {journal} {Foundations of Physics}\ }\textbf {\bibinfo {volume} {39}},\ \bibinfo {pages} {593--612} (\bibinfo {year} {2009})},\ \Eprint {https://arxiv.org/abs/0903.0523}{arXiv:0903.0523 [quant-ph]}\BibitemShut {NoStop}%
\bibitem [{\citenamefont {{Uola}}\ \emph {et~al.}(2015)\citenamefont {{Uola}}, \citenamefont {{Budroni}}, \citenamefont {{G{\"u}hne}},\ and\ \citenamefont {{Pellonp{\"a}{\"a}}}}]{uola15}%
  \BibitemOpen
  \bibfield  {author} {\bibinfo {author} {\bibfnamefont {R.}~\bibnamefont {{Uola}}}, \bibinfo {author} {\bibfnamefont {C.}~\bibnamefont {{Budroni}}}, \bibinfo {author} {\bibfnamefont {O.}~\bibnamefont {{G{\"u}hne}}},\ and\ \bibinfo {author} {\bibfnamefont {J.-P.}\ \bibnamefont {{Pellonp{\"a}{\"a}}}},\ }\bibfield  {title} {\bibinfo {title} {{One-to-One Mapping between Steering and Joint Measurability Problems}},\ }\href {https://doi.org/10.1103/PhysRevLett.115.230402} {\bibfield  {journal} {\bibinfo  {journal} {Phys. Rev. Lett.}\ }\textbf {\bibinfo {volume} {115}},\ \bibinfo {eid} {230402} (\bibinfo {year} {2015})},\ \Eprint {https://arxiv.org/abs/1507.08633}{arXiv:1507.08633 [quant-ph]}\BibitemShut {NoStop}%
\bibitem [{\citenamefont {{Fiorini}}\ \emph {et~al.}(2015)\citenamefont {{Fiorini}}, \citenamefont {{Massar}}, \citenamefont {{Pokutta}}, \citenamefont {{Tiwary}},\ and\ \citenamefont {{de Wolf}}}]{fmptw2011jour}%
  \BibitemOpen
  \bibfield  {author} {\bibinfo {author} {\bibfnamefont {S.}~\bibnamefont {{Fiorini}}}, \bibinfo {author} {\bibfnamefont {S.}~\bibnamefont {{Massar}}}, \bibinfo {author} {\bibfnamefont {S.}~\bibnamefont {{Pokutta}}}, \bibinfo {author} {\bibfnamefont {H.~R.}\ \bibnamefont {{Tiwary}}},\ and\ \bibinfo {author} {\bibfnamefont {R.}~\bibnamefont {{de Wolf}}},\ }\bibfield  {title} {\bibinfo {title} {{Exponential Lower Bounds for Polytopes in Combinatorial Optimization}},\ }\href {https://doi.org/10.1145/2716307} {\bibfield  {journal} {\bibinfo  {journal} {Journal of the ACM}\ }\textbf {\bibinfo {volume} {62}},\ \bibinfo {pages} {1--17} (\bibinfo {year} {2015})},\ \Eprint {https://arxiv.org/abs/1111.0837}{arXiv:1111.0837 [math-CO]}\BibitemShut {NoStop}%
\bibitem [{\citenamefont {Braun}\ \emph {et~al.}(2015)\citenamefont {Braun}, \citenamefont {Fiorini}, \citenamefont {Pokutta},\ and\ \citenamefont {Steurer}}]{bfps2012jour}%
  \BibitemOpen
  \bibfield  {author} {\bibinfo {author} {\bibfnamefont {G.}~\bibnamefont {Braun}}, \bibinfo {author} {\bibfnamefont {S.}~\bibnamefont {Fiorini}}, \bibinfo {author} {\bibfnamefont {S.}~\bibnamefont {Pokutta}},\ and\ \bibinfo {author} {\bibfnamefont {D.}~\bibnamefont {Steurer}},\ }\bibfield  {title} {\bibinfo {title} {{Approximation Limits of Linear Programs (Beyond Hierarchies)}},\ }\href {https://doi.org/10.1287/moor.2014.0694} {\bibfield  {journal} {\bibinfo  {journal} {{Math. Oper. Res.}}\ }\textbf {\bibinfo {volume} {40}},\ \bibinfo {pages} {179--199} (\bibinfo {year} {2015})},\ \Eprint {https://arxiv.org/abs/1204.0957}{arXiv:1204.0957 [cs.CC]}\BibitemShut {NoStop}%
\bibitem [{\citenamefont {Braun}\ and\ \citenamefont {Pokutta}(2014)}]{braun2014matching}%
  \BibitemOpen
  \bibfield  {author} {\bibinfo {author} {\bibfnamefont {G.}~\bibnamefont {Braun}}\ and\ \bibinfo {author} {\bibfnamefont {S.}~\bibnamefont {Pokutta}},\ }\bibfield  {title} {\bibinfo {title} {The matching polytope does not admit fully-polynomial size relaxation schemes},\ }\href {https://doi.org/10.1137/1.9781611973730.57} {\bibfield  {journal} {\bibinfo  {journal} {Proceedings of the Twenty-Sixth Annual ACM-SIAM Symposium on Discrete Algorithms}\ ,\ \bibinfo {pages} {837--846}} (\bibinfo {year} {2014})},\ \Eprint {https://arxiv.org/abs/1403.6710}{arXiv:1403.6710 [cs.CC]}\BibitemShut {NoStop}%
\bibitem [{\citenamefont {Gr{\"o}tschel}\ \emph {et~al.}(1981)\citenamefont {Gr{\"o}tschel}, \citenamefont {Lov{\'a}sz},\ and\ \citenamefont {Schrijver}}]{grotschel1981ellipsoid}%
  \BibitemOpen
  \bibfield  {author} {\bibinfo {author} {\bibfnamefont {M.}~\bibnamefont {Gr{\"o}tschel}}, \bibinfo {author} {\bibfnamefont {L.}~\bibnamefont {Lov{\'a}sz}},\ and\ \bibinfo {author} {\bibfnamefont {A.}~\bibnamefont {Schrijver}},\ }\bibfield  {title} {\bibinfo {title} {The ellipsoid method and its consequences in combinatorial optimization},\ }\href {https://doi.org/10.1007/BF02579273} {\bibfield  {journal} {\bibinfo  {journal} {Combinatorica}\ }\textbf {\bibinfo {volume} {1}},\ \bibinfo {pages} {169--197} (\bibinfo {year} {1981})}\BibitemShut {NoStop}%
\bibitem [{\citenamefont {Alizadeh}(1995)}]{alizadeh1995interior}%
  \BibitemOpen
  \bibfield  {author} {\bibinfo {author} {\bibfnamefont {F.}~\bibnamefont {Alizadeh}},\ }\bibfield  {title} {\bibinfo {title} {{Interior Point Methods in Semidefinite Programming with Applications to Combinatorial Optimization}},\ }\href {https://doi.org/10.1137/0805002} {\bibfield  {journal} {\bibinfo  {journal} {SIAM Journal on Optimization}\ }\textbf {\bibinfo {volume} {5}},\ \bibinfo {pages} {13--51} (\bibinfo {year} {1995})}\BibitemShut {NoStop}%
\bibitem [{\citenamefont {Nesterov}\ and\ \citenamefont {Nemirovskii}(1994)}]{nesterov1994interior}%
  \BibitemOpen
  \bibfield  {author} {\bibinfo {author} {\bibfnamefont {Y.}~\bibnamefont {Nesterov}}\ and\ \bibinfo {author} {\bibfnamefont {A.}~\bibnamefont {Nemirovskii}},\ }\href {https://doi.org/10.1137/1.9781611970791} {\bibinfo {title} {Interior-Point Polynomial Algorithms in Convex Programming}}\ (\bibinfo  {publisher} {Society for Industrial and Applied Mathematics},\ \bibinfo {year} {1994})\BibitemShut {NoStop}%
\bibitem [{\citenamefont {Vandenberghe}\ and\ \citenamefont {Boyd}(1996)}]{vandenberghe1996semidefinite}%
  \BibitemOpen
  \bibfield  {author} {\bibinfo {author} {\bibfnamefont {L.}~\bibnamefont {Vandenberghe}}\ and\ \bibinfo {author} {\bibfnamefont {S.}~\bibnamefont {Boyd}},\ }\bibfield  {title} {\bibinfo {title} {{Semidefinite Programming}},\ }\href {https://doi.org/10.1137/1038003} {\bibfield  {journal} {\bibinfo  {journal} {SIAM Review}\ }\textbf {\bibinfo {volume} {38}},\ \bibinfo {pages} {49--95} (\bibinfo {year} {1996})}\BibitemShut {NoStop}%
\bibitem [{\citenamefont {Aubrun}\ and\ \citenamefont {Szarek}(2017)}]{AS17}%
  \BibitemOpen
  \bibfield  {author} {\bibinfo {author} {\bibfnamefont {G.}~\bibnamefont {Aubrun}}\ and\ \bibinfo {author} {\bibfnamefont {S.~J.}\ \bibnamefont {Szarek}},\ }\href {https://doi.org/10.1090/surv/223} {\bibinfo {title} {{Alice and Bob Meet Banach: The Interface of Asymptotic Geometric Analysis and Quantum Information Theory}}},\ Mathematical Surveys and Monographs, Volume 223\ (\bibinfo  {publisher} {American Mathematical Society},\ \bibinfo {year} {2017})\BibitemShut {NoStop}%
\bibitem [{\citenamefont {Hirsch}\ \emph {et~al.}(2017)\citenamefont {Hirsch}, \citenamefont {Quintino}, \citenamefont {V{\'{e}}rtesi}, \citenamefont {Navascu{\'{e}}s},\ and\ \citenamefont {Brunner}}]{hirsch16}%
  \BibitemOpen
  \bibfield  {author} {\bibinfo {author} {\bibfnamefont {F.}~\bibnamefont {Hirsch}}, \bibinfo {author} {\bibfnamefont {M.~T.}\ \bibnamefont {Quintino}}, \bibinfo {author} {\bibfnamefont {T.}~\bibnamefont {V{\'{e}}rtesi}}, \bibinfo {author} {\bibfnamefont {M.}~\bibnamefont {Navascu{\'{e}}s}},\ and\ \bibinfo {author} {\bibfnamefont {N.}~\bibnamefont {Brunner}},\ }\bibfield  {title} {\bibinfo {title} {Better local hidden variable models for two-qubit {W}erner states and an upper bound on the {G}rothendieck constant {$K_G(3)$}},\ }\href {https://doi.org/10.22331/q-2017-04-25-3} {\bibfield  {journal} {\bibinfo  {journal} {{Quantum}}\ }\textbf {\bibinfo {volume} {1}},\ \bibinfo {pages} {3} (\bibinfo {year} {2017})},\ \Eprint {https://arxiv.org/abs/1609.06114}{arXiv:1609.06114 [quant-ph]}\BibitemShut {NoStop}%
\bibitem [{\citenamefont {Renegar}(2001)}]{renegar2001mathematical}%
  \BibitemOpen
  \bibfield  {author} {\bibinfo {author} {\bibfnamefont {J.}~\bibnamefont {Renegar}},\ }\href {https://doi.org/10.1137/1.9780898718812} {\bibinfo {title} {{A Mathematical View of Interior-Point Methods in Convex Optimization}}}\ (\bibinfo  {publisher} {Society for Industrial and Applied Mathematics},\ \bibinfo {year} {2001})\BibitemShut {NoStop}%
\bibitem [{\citenamefont {Boyd}\ and\ \citenamefont {Vandenberghe}(2004)}]{boyd2004convex}%
  \BibitemOpen
  \bibfield  {author} {\bibinfo {author} {\bibfnamefont {S.~P.}\ \bibnamefont {Boyd}}\ and\ \bibinfo {author} {\bibfnamefont {L.}~\bibnamefont {Vandenberghe}},\ }\href {https://doi.org/10.1017/CBO9780511804441} {\bibinfo {title} {{Convex Optimization}}}\ (\bibinfo  {publisher} {Cambridge University Press},\ \bibinfo {year} {2004})\BibitemShut {NoStop}%
\bibitem [{\citenamefont {Chiribella}\ and\ \citenamefont {Mauro~D’Ariano}(2006)}]{chiribella06}%
  \BibitemOpen
  \bibfield  {author} {\bibinfo {author} {\bibfnamefont {G.}~\bibnamefont {Chiribella}}\ and\ \bibinfo {author} {\bibfnamefont {G.}~\bibnamefont {Mauro~D’Ariano}},\ }\bibfield  {title} {\bibinfo {title} {Extremal covariant measurements},\ }\href {https://doi.org/10.1063/1.2349481} {\bibfield  {journal} {\bibinfo  {journal} {J. Math. Phys.}\ }\textbf {\bibinfo {volume} {47}},\ \bibinfo {pages} {092107} (\bibinfo {year} {2006})},\ \Eprint {https://arxiv.org/abs/quant-ph/0603168}{arXiv:quant-ph/0603168 [quant-ph]}\BibitemShut {NoStop}%
\bibitem [{\citenamefont {Renegar}(1988)}]{Renegar1988}%
  \BibitemOpen
  \bibfield  {author} {\bibinfo {author} {\bibfnamefont {J.}~\bibnamefont {Renegar}},\ }\bibfield  {title} {\bibinfo {title} {{A polynomial-time algorithm, based on Newton's method, for linear programming}},\ }\href {https://doi.org/10.1007/BF01580724} {\bibfield  {journal} {\bibinfo  {journal} {Math. Program.}\ }\textbf {\bibinfo {volume} {40}},\ \bibinfo {pages} {59–93} (\bibinfo {year} {1988})}\BibitemShut {NoStop}%
\bibitem [{\citenamefont {Wright}(1997)}]{Wright1997}%
  \BibitemOpen
  \bibfield  {author} {\bibinfo {author} {\bibfnamefont {S.~J.}\ \bibnamefont {Wright}},\ }\href {https://doi.org/10.1137/1.9781611971453} {\bibinfo {title} {Primal-Dual Interior-Point Methods}}\ (\bibinfo  {publisher} {Society for Industrial and Applied Mathematics},\ \bibinfo {year} {1997})\BibitemShut {NoStop}%
\bibitem [{\citenamefont {{Hardin}}\ \emph {et~al.}(1994, 2000)\citenamefont {{Hardin}}, \citenamefont {{Sloane}},\ and\ \citenamefont {{Smith}}}]{HardinSloane_spherical_codes}%
  \BibitemOpen
  \bibfield  {author} {\bibinfo {author} {\bibfnamefont {R.~H.}\ \bibnamefont {{Hardin}}}, \bibinfo {author} {\bibfnamefont {J.~A.}\ \bibnamefont {{Sloane}}},\ and\ \bibinfo {author} {\bibfnamefont {W.~D.}\ \bibnamefont {{Smith}}},\ }\href@noop {} {\bibinfo {title} {Tables of spherical codes with icosahedral symmetry, published electronically at \url{http://NeilSloane.com/icosahedral.codes/}}} (\bibinfo {year} {1994, 2000})\BibitemShut {NoStop}%
\bibitem [{\citenamefont {Lörwald}\ and\ \citenamefont {Reinelt}(2015)}]{LR15}%
  \BibitemOpen
  \bibfield  {author} {\bibinfo {author} {\bibfnamefont {S.}~\bibnamefont {Lörwald}}\ and\ \bibinfo {author} {\bibfnamefont {G.}~\bibnamefont {Reinelt}},\ }\bibfield  {title} {\bibinfo {title} {Panda: a software for polyhedral transformations},\ }\href {https://doi.org/10.1007/s13675-015-0040-0} {\bibfield  {journal} {\bibinfo  {journal} {EURO Journal on Computational Optimization}\ }\textbf {\bibinfo {volume} {3}},\ \bibinfo {pages} {297--308} (\bibinfo {year} {2015})}\BibitemShut {NoStop}%
\bibitem [{\citenamefont {{Skrzypczyk}}\ \emph {et~al.}(2020)\citenamefont {{Skrzypczyk}}, \citenamefont {{Hoban}}, \citenamefont {{Sainz}},\ and\ \citenamefont {{Linden}}}]{Skrzypczyk2020}%
  \BibitemOpen
  \bibfield  {author} {\bibinfo {author} {\bibfnamefont {P.}~\bibnamefont {{Skrzypczyk}}}, \bibinfo {author} {\bibfnamefont {M.~J.}\ \bibnamefont {{Hoban}}}, \bibinfo {author} {\bibfnamefont {A.~B.}\ \bibnamefont {{Sainz}}},\ and\ \bibinfo {author} {\bibfnamefont {N.}~\bibnamefont {{Linden}}},\ }\bibfield  {title} {\bibinfo {title} {{Complexity of compatible measurements}},\ }\href {https://doi.org/10.1103/PhysRevResearch.2.023292} {\bibfield  {journal} {\bibinfo  {journal} {Phys. Rev. Res.}\ }\textbf {\bibinfo {volume} {2}},\ \bibinfo {eid} {023292} (\bibinfo {year} {2020})},\ \Eprint {https://arxiv.org/abs/1908.10085}{arXiv:1908.10085 [quant-ph]}\BibitemShut {NoStop}%
\bibitem [{\citenamefont {{Ohst}}\ \emph {et~al.}(2024)\citenamefont {{Ohst}}, \citenamefont {{Yu}}, \citenamefont {{G{\"u}hne}},\ and\ \citenamefont {{Nguyen}}}]{ohst2024adaptive}%
  \BibitemOpen
  \bibfield  {author} {\bibinfo {author} {\bibfnamefont {T.-A.}\ \bibnamefont {{Ohst}}}, \bibinfo {author} {\bibfnamefont {X.-D.}\ \bibnamefont {{Yu}}}, \bibinfo {author} {\bibfnamefont {O.}~\bibnamefont {{G{\"u}hne}}},\ and\ \bibinfo {author} {\bibfnamefont {C.~H.}\ \bibnamefont {{Nguyen}}},\ }\bibfield  {title} {\bibinfo {title} {{Certifying Quantum Separability with Adaptive Polytopes}},\ }\href {https://doi.org/10.21468/SciPostPhys.16.3.063} {\bibfield  {journal} {\bibinfo  {journal} {SciPost Physics}\ }\textbf {\bibinfo {volume} {16}},\ \bibinfo {eid} {063} (\bibinfo {year} {2024})},\ \Eprint {https://arxiv.org/abs/2210.10054}{arXiv:2210.10054 [quant-ph]}\BibitemShut {NoStop}%
\bibitem [{\citenamefont {Porto}(2025)}]{gitcode}%
  \BibitemOpen
  \bibfield  {author} {\bibinfo {author} {\bibfnamefont {L.}~\bibnamefont {Porto}},\ }\href@noop {} {\bibinfo {title} {Github repository: Measurement incompatibility and quantum steering via linear programming}},\ \bibinfo {howpublished} {\url{https://github.com/lucporto/LpJm}} (\bibinfo {year} {2025})\BibitemShut {NoStop}%
\bibitem [{\citenamefont {{Gonz{\'a}lez}}(2010)}]{gonzalez2010fibonacci}%
  \BibitemOpen
  \bibfield  {author} {\bibinfo {author} {\bibfnamefont {{\'A}.}~\bibnamefont {{Gonz{\'a}lez}}},\ }\bibfield  {title} {\bibinfo {title} {{Measurement of Areas on a Sphere Using Fibonacci and Latitude-Longitude Lattices}},\ }\href {https://doi.org/10.1007/s11004-009-9257-x} {\bibfield  {journal} {\bibinfo  {journal} {Mathematical Geosciences}\ }\textbf {\bibinfo {volume} {42}},\ \bibinfo {pages} {49--64} (\bibinfo {year} {2010})},\ \Eprint {https://arxiv.org/abs/0912.4540}{arXiv:0912.4540 [math.MG]}\BibitemShut {NoStop}%
\bibitem [{\citenamefont {{Bavaresco}}\ \emph {et~al.}(2017)\citenamefont {{Bavaresco}}, \citenamefont {{Quintino}}, \citenamefont {{Guerini}}, \citenamefont {{Maciel}}, \citenamefont {{Cavalcanti}},\ and\ \citenamefont {{Cunha}}}]{bavaresco17}%
  \BibitemOpen
  \bibfield  {author} {\bibinfo {author} {\bibfnamefont {J.}~\bibnamefont {{Bavaresco}}}, \bibinfo {author} {\bibfnamefont {M.~T.}\ \bibnamefont {{Quintino}}}, \bibinfo {author} {\bibfnamefont {L.}~\bibnamefont {{Guerini}}}, \bibinfo {author} {\bibfnamefont {T.~O.}\ \bibnamefont {{Maciel}}}, \bibinfo {author} {\bibfnamefont {D.}~\bibnamefont {{Cavalcanti}}},\ and\ \bibinfo {author} {\bibfnamefont {M.~T.}\ \bibnamefont {{Cunha}}},\ }\bibfield  {title} {\bibinfo {title} {{Most incompatible measurements for robust steering tests}},\ }\href {https://doi.org/10.1103/PhysRevA.96.022110} {\bibfield  {journal} {\bibinfo  {journal} {Phys. Rev. A}\ }\textbf {\bibinfo {volume} {96}},\ \bibinfo {eid} {022110} (\bibinfo {year} {2017})},\ \Eprint {https://arxiv.org/abs/1704.02994}{arXiv:1704.02994 [quant-ph]}\BibitemShut {NoStop}%
\bibitem [{\citenamefont {Werner}(1989)}]{werner89}%
  \BibitemOpen
  \bibfield  {author} {\bibinfo {author} {\bibfnamefont {R.~F.}\ \bibnamefont {Werner}},\ }\bibfield  {title} {\bibinfo {title} {Quantum states with {Einstein}-{Podolsky}-{Rosen} correlations admitting a hidden-variable model},\ }\href {https://doi.org/10.1103/PhysRevA.40.4277} {\bibfield  {journal} {\bibinfo  {journal} {Phys. Rev. A}\ }\textbf {\bibinfo {volume} {40}},\ \bibinfo {pages} {4277--4281} (\bibinfo {year} {1989})}\BibitemShut {NoStop}%
\bibitem [{\citenamefont {{Zhang}}\ and\ \citenamefont {{Chitambar}}(2024)}]{zhang2024}%
  \BibitemOpen
  \bibfield  {author} {\bibinfo {author} {\bibfnamefont {Y.}~\bibnamefont {{Zhang}}}\ and\ \bibinfo {author} {\bibfnamefont {E.}~\bibnamefont {{Chitambar}}},\ }\bibfield  {title} {\bibinfo {title} {{Exact Steering Bound for Two-Qubit Werner States}},\ }\href {https://doi.org/10.1103/PhysRevLett.132.250201} {\bibfield  {journal} {\bibinfo  {journal} {Phys. Rev. Lett.}\ }\textbf {\bibinfo {volume} {132}},\ \bibinfo {eid} {250201} (\bibinfo {year} {2024})},\ \Eprint {https://arxiv.org/abs/2309.09960}{arXiv:2309.09960 [quant-ph]}\BibitemShut {NoStop}%
\bibitem [{\citenamefont {{Renner}}(2024)}]{renner2024}%
  \BibitemOpen
  \bibfield  {author} {\bibinfo {author} {\bibfnamefont {M.~J.}\ \bibnamefont {{Renner}}},\ }\bibfield  {title} {\bibinfo {title} {{Compatibility of Generalized Noisy Qubit Measurements}},\ }\href {https://doi.org/10.1103/PhysRevLett.132.250202} {\bibfield  {journal} {\bibinfo  {journal} {Phys. Rev. Lett.}\ }\textbf {\bibinfo {volume} {132}},\ \bibinfo {eid} {250202} (\bibinfo {year} {2024})},\ \Eprint {https://arxiv.org/abs/2309.12290}{arXiv:2309.12290 [quant-ph]}\BibitemShut {NoStop}%
\bibitem [{\citenamefont {{MOSEK ApS}}(2025)}]{MOSEK}%
  \BibitemOpen
  \bibfield  {author} {\bibinfo {author} {\bibnamefont {{MOSEK ApS}}},\ }\href {https://docs.mosek.com/latest/juliaapi/index.html} {\bibinfo {title} {MOSEK Optimizer API for Julia 11.0.20}} (\bibinfo {year} {2025})\BibitemShut {NoStop}%
\bibitem [{\citenamefont {{Andrejic}}\ and\ \citenamefont {{Kunjwal}}(2020)}]{andrejic2020}%
  \BibitemOpen
  \bibfield  {author} {\bibinfo {author} {\bibfnamefont {N.}~\bibnamefont {{Andrejic}}}\ and\ \bibinfo {author} {\bibfnamefont {R.}~\bibnamefont {{Kunjwal}}},\ }\bibfield  {title} {\bibinfo {title} {{Joint measurability structures realizable with qubit measurements: Incompatibility via marginal surgery}},\ }\href {https://doi.org/10.1103/PhysRevResearch.2.043147} {\bibfield  {journal} {\bibinfo  {journal} {Physical Review Research}\ }\textbf {\bibinfo {volume} {2}},\ \bibinfo {eid} {043147} (\bibinfo {year} {2020})},\ \Eprint {https://arxiv.org/abs/2003.00785}{arXiv:2003.00785 [quant-ph]}\BibitemShut {NoStop}%
\bibitem [{\citenamefont {Araújo}\ \emph {et~al.}(2025)\citenamefont {Araújo}, \citenamefont {Brown}, \citenamefont {Designolle}, \citenamefont {de~Gois}, \citenamefont {Liu}, \citenamefont {Porto},\ and\ \citenamefont {Quintino}}]{Ket.jl}%
  \BibitemOpen
  \bibfield  {author} {\bibinfo {author} {\bibfnamefont {M.}~\bibnamefont {Araújo}}, \bibinfo {author} {\bibfnamefont {P.}~\bibnamefont {Brown}}, \bibinfo {author} {\bibfnamefont {S.}~\bibnamefont {Designolle}}, \bibinfo {author} {\bibfnamefont {C.}~\bibnamefont {de~Gois}}, \bibinfo {author} {\bibfnamefont {Y.-C.}\ \bibnamefont {Liu}}, \bibinfo {author} {\bibfnamefont {L.}~\bibnamefont {Porto}},\ and\ \bibinfo {author} {\bibfnamefont {M.~T.}\ \bibnamefont {Quintino}},\ }\href {https://doi.org/10.5281/zenodo.14674642} {\bibinfo {title} {{Ket.jl: a Julia toolbox for quantum information, nonlocality, and entanglement}}} (\bibinfo {year} {2025})\BibitemShut {NoStop}%
\bibitem [{\citenamefont {{Heinosaari}}\ \emph {et~al.}(2020)\citenamefont {{Heinosaari}}, \citenamefont {{Jivulescu}},\ and\ \citenamefont {{Nechita}}}]{heinossari2020random}%
  \BibitemOpen
  \bibfield  {author} {\bibinfo {author} {\bibfnamefont {T.}~\bibnamefont {{Heinosaari}}}, \bibinfo {author} {\bibfnamefont {M.~A.}\ \bibnamefont {{Jivulescu}}},\ and\ \bibinfo {author} {\bibfnamefont {I.}~\bibnamefont {{Nechita}}},\ }\bibfield  {title} {\bibinfo {title} {{Random positive operator valued measures}},\ }\href {https://doi.org/10.1063/1.5131028} {\bibfield  {journal} {\bibinfo  {journal} {J. Math. Phys.}\ }\textbf {\bibinfo {volume} {61}},\ \bibinfo {eid} {042202} (\bibinfo {year} {2020})},\ \Eprint {https://arxiv.org/abs/1902.04751}{arXiv:1902.04751 [quant-ph]}\BibitemShut {NoStop}%
\bibitem [{\citenamefont {{Mauro D'Ariano}}\ \emph {et~al.}(2005)\citenamefont {{Mauro D'Ariano}}, \citenamefont {{Lo Presti}},\ and\ \citenamefont {{Perinotti}}}]{dariano05}%
  \BibitemOpen
  \bibfield  {author} {\bibinfo {author} {\bibfnamefont {G.}~\bibnamefont {{Mauro D'Ariano}}}, \bibinfo {author} {\bibfnamefont {P.}~\bibnamefont {{Lo Presti}}},\ and\ \bibinfo {author} {\bibfnamefont {P.}~\bibnamefont {{Perinotti}}},\ }\bibfield  {title} {\bibinfo {title} {{Classical randomness in quantum measurements}},\ }\href {https://doi.org/10.1088/0305-4470/38/26/010} {\bibfield  {journal} {\bibinfo  {journal} {J. Phys. A: Math. Gen.}\ }\textbf {\bibinfo {volume} {38}},\ \bibinfo {pages} {5979--5991} (\bibinfo {year} {2005})},\ \Eprint {https://arxiv.org/abs/quant-ph/0408115}{arXiv:quant-ph/0408115}\BibitemShut {NoStop}%
\bibitem [{\citenamefont {{Saunders}}\ \emph {et~al.}(2010)\citenamefont {{Saunders}}, \citenamefont {{Jones}}, \citenamefont {{Wiseman}},\ and\ \citenamefont {{Pryde}}}]{saunders10}%
  \BibitemOpen
  \bibfield  {author} {\bibinfo {author} {\bibfnamefont {D.~J.}\ \bibnamefont {{Saunders}}}, \bibinfo {author} {\bibfnamefont {S.~J.}\ \bibnamefont {{Jones}}}, \bibinfo {author} {\bibfnamefont {H.~M.}\ \bibnamefont {{Wiseman}}},\ and\ \bibinfo {author} {\bibfnamefont {G.~J.}\ \bibnamefont {{Pryde}}},\ }\bibfield  {title} {\bibinfo {title} {{Experimental EPR-steering using Bell-local states}},\ }\href {https://doi.org/10.1038/nphys1766} {\bibfield  {journal} {\bibinfo  {journal} {Nature Physics}\ }\textbf {\bibinfo {volume} {6}},\ \bibinfo {pages} {845--849} (\bibinfo {year} {2010})},\ \Eprint {https://arxiv.org/abs/0909.0805}{arXiv:0909.0805 [quant-ph]}\BibitemShut {NoStop}%
\bibitem [{\citenamefont {{Cavalcanti}}\ \emph {et~al.}(2009)\citenamefont {{Cavalcanti}}, \citenamefont {{Jones}}, \citenamefont {{Wiseman}},\ and\ \citenamefont {{Reid}}}]{cavalcanti09}%
  \BibitemOpen
  \bibfield  {author} {\bibinfo {author} {\bibfnamefont {E.~G.}\ \bibnamefont {{Cavalcanti}}}, \bibinfo {author} {\bibfnamefont {S.~J.}\ \bibnamefont {{Jones}}}, \bibinfo {author} {\bibfnamefont {H.~M.}\ \bibnamefont {{Wiseman}}},\ and\ \bibinfo {author} {\bibfnamefont {M.~D.}\ \bibnamefont {{Reid}}},\ }\bibfield  {title} {\bibinfo {title} {{Experimental criteria for steering and the Einstein-Podolsky-Rosen paradox}},\ }\href {https://doi.org/10.1103/PhysRevA.80.032112} {\bibfield  {journal} {\bibinfo  {journal} {Phys. Rev. A}\ }\textbf {\bibinfo {volume} {80}},\ \bibinfo {eid} {032112} (\bibinfo {year} {2009})},\ \Eprint {https://arxiv.org/abs/0907.1109}{arXiv:0907.1109 [quant-ph]}\BibitemShut {NoStop}%
\bibitem [{\citenamefont {Quintino}\ \emph {et~al.}(2015)\citenamefont {Quintino}, \citenamefont {V\'ertesi}, \citenamefont {Cavalcanti}, \citenamefont {Augusiak}, \citenamefont {Demianowicz}, \citenamefont {Ac\'{\i}n},\ and\ \citenamefont {Brunner}}]{quintino15}%
  \BibitemOpen
  \bibfield  {author} {\bibinfo {author} {\bibfnamefont {M.~T.}\ \bibnamefont {Quintino}}, \bibinfo {author} {\bibfnamefont {T.}~\bibnamefont {V\'ertesi}}, \bibinfo {author} {\bibfnamefont {D.}~\bibnamefont {Cavalcanti}}, \bibinfo {author} {\bibfnamefont {R.}~\bibnamefont {Augusiak}}, \bibinfo {author} {\bibfnamefont {M.}~\bibnamefont {Demianowicz}}, \bibinfo {author} {\bibfnamefont {A.}~\bibnamefont {Ac\'{\i}n}},\ and\ \bibinfo {author} {\bibfnamefont {N.}~\bibnamefont {Brunner}},\ }\bibfield  {title} {\bibinfo {title} {Inequivalence of entanglement, steering, and {Bell} nonlocality for general measurements},\ }\href {https://doi.org/10.1103/PhysRevA.92.032107} {\bibfield  {journal} {\bibinfo  {journal} {Phys. Rev. A}\ }\textbf {\bibinfo {volume} {92}},\ \bibinfo {pages} {032107} (\bibinfo {year} {2015})},\ \Eprint {https://arxiv.org/abs/1501.03332}{arXiv:1501.03332 [quant-ph]}\BibitemShut {NoStop}%
\bibitem [{\citenamefont {{Barrett}}(2002)}]{barrett02}%
  \BibitemOpen
  \bibfield  {author} {\bibinfo {author} {\bibfnamefont {J.}~\bibnamefont {{Barrett}}},\ }\bibfield  {title} {\bibinfo {title} {{Nonsequential positive-operator-valued measurements on entangled mixed states do not always violate a Bell inequality}},\ }\href {https://doi.org/10.1103/PhysRevA.65.042302} {\bibfield  {journal} {\bibinfo  {journal} {Phys. Rev. A}\ }\textbf {\bibinfo {volume} {65}},\ \bibinfo {eid} {042302} (\bibinfo {year} {2002})},\ \Eprint {https://arxiv.org/abs/quant-ph/0107045}{arXiv:quant-ph/0107045}\BibitemShut {NoStop}%
\bibitem [{\citenamefont {{Chau Nguyen}}\ \emph {et~al.}(2018)\citenamefont {{Chau Nguyen}}, \citenamefont {{Milne}}, \citenamefont {{Vu}},\ and\ \citenamefont {{Jevtic}}}]{nguyen2018}%
  \BibitemOpen
  \bibfield  {author} {\bibinfo {author} {\bibfnamefont {H.}~\bibnamefont {{Chau Nguyen}}}, \bibinfo {author} {\bibfnamefont {A.}~\bibnamefont {{Milne}}}, \bibinfo {author} {\bibfnamefont {T.}~\bibnamefont {{Vu}}},\ and\ \bibinfo {author} {\bibfnamefont {S.}~\bibnamefont {{Jevtic}}},\ }\bibfield  {title} {\bibinfo {title} {{Quantum steering with positive operator valued measures}},\ }\href {https://doi.org/10.1088/1751-8121/aad115} {\bibfield  {journal} {\bibinfo  {journal} {J. Phys. A: Math. Gen.}\ }\textbf {\bibinfo {volume} {51}},\ \bibinfo {eid} {355302} (\bibinfo {year} {2018})},\ \Eprint {https://arxiv.org/abs/1706.08166}{arXiv:1706.08166 [quant-ph]}\BibitemShut {NoStop}%
\bibitem [{\citenamefont {{Designolle}}\ \emph {et~al.}(2023)\citenamefont {{Designolle}}, \citenamefont {{Iommazzo}}, \citenamefont {{Besan{\c{c}}on}}, \citenamefont {{Knebel}}, \citenamefont {{Gel{\ss}}},\ and\ \citenamefont {{Pokutta}}}]{Designolle23Grothendieck}%
  \BibitemOpen
  \bibfield  {author} {\bibinfo {author} {\bibfnamefont {S.}~\bibnamefont {{Designolle}}}, \bibinfo {author} {\bibfnamefont {G.}~\bibnamefont {{Iommazzo}}}, \bibinfo {author} {\bibfnamefont {M.}~\bibnamefont {{Besan{\c{c}}on}}}, \bibinfo {author} {\bibfnamefont {S.}~\bibnamefont {{Knebel}}}, \bibinfo {author} {\bibfnamefont {P.}~\bibnamefont {{Gel{\ss}}}},\ and\ \bibinfo {author} {\bibfnamefont {S.}~\bibnamefont {{Pokutta}}},\ }\bibfield  {title} {\bibinfo {title} {{Improved local models and new Bell inequalities via Frank-Wolfe algorithms}},\ }\href {https://doi.org/10.1103/PhysRevResearch.5.043059} {\bibfield  {journal} {\bibinfo  {journal} {Phys. Rev. Res.}\ }\textbf {\bibinfo {volume} {5}},\ \bibinfo {eid} {043059} (\bibinfo {year} {2023})},\ \Eprint {https://arxiv.org/abs/2302.04721}{arXiv:2302.04721 [quant-ph]}\BibitemShut {NoStop}%
\bibitem [{\citenamefont {Designolle}(2025)}]{gitdata}%
  \BibitemOpen
  \bibfield  {author} {\bibinfo {author} {\bibfnamefont {S.}~\bibnamefont {Designolle}},\ }\href@noop {} {\bibinfo {title} {Github repository: Approximation of the set of quantum states via polytopes}},\ \bibinfo {howpublished} {\url{https://github.com/sebastiendesignolle/ApproximationQuantumStates}} (\bibinfo {year} {2025})\BibitemShut {NoStop}%
\bibitem [{\citenamefont {Hughes}\ and\ \citenamefont {Waldron}(2021)}]{hugues2021spherical}%
  \BibitemOpen
  \bibfield  {author} {\bibinfo {author} {\bibfnamefont {D.}~\bibnamefont {Hughes}}\ and\ \bibinfo {author} {\bibfnamefont {S.}~\bibnamefont {Waldron}},\ }\bibfield  {title} {\bibinfo {title} {Spherical (t,t)-designs with a small number of vectors},\ }\href {https://doi.org/https://doi.org/10.1016/j.laa.2020.08.010} {\bibfield  {journal} {\bibinfo  {journal} {Linear Algebra and its Applications}\ }\textbf {\bibinfo {volume} {608}},\ \bibinfo {pages} {84--106} (\bibinfo {year} {2021})}\BibitemShut {NoStop}%
\bibitem [{\citenamefont {Nguyen}\ \emph {et~al.}(2020)\citenamefont {Nguyen}, \citenamefont {Designolle}, \citenamefont {Barakat},\ and\ \citenamefont {Gühne}}]{NDBG20}%
  \BibitemOpen
  \bibfield  {author} {\bibinfo {author} {\bibfnamefont {H.~C.}\ \bibnamefont {Nguyen}}, \bibinfo {author} {\bibfnamefont {S.}~\bibnamefont {Designolle}}, \bibinfo {author} {\bibfnamefont {M.}~\bibnamefont {Barakat}},\ and\ \bibinfo {author} {\bibfnamefont {O.}~\bibnamefont {Gühne}},\ }\bibfield  {title} {\bibinfo {title} {Symmetries between measurements in quantum mechanics},\ }\href {https://arxiv.org/abs/2003.12553} {\bibfield  {journal} {\bibinfo  {journal} {arXiv:2003.12553}\ } (\bibinfo {year} {2020})}\BibitemShut {NoStop}%
\bibitem [{\citenamefont {Wootters}\ and\ \citenamefont {Fields}(1989)}]{WF89}%
  \BibitemOpen
  \bibfield  {author} {\bibinfo {author} {\bibfnamefont {W.~K.}\ \bibnamefont {Wootters}}\ and\ \bibinfo {author} {\bibfnamefont {B.~D.}\ \bibnamefont {Fields}},\ }\bibfield  {title} {\bibinfo {title} {Optimal state-determination by mutually unbiased measurements},\ }\href {https://doi.org/https://doi.org/10.1016/0003-4916(89)90322-9} {\bibfield  {journal} {\bibinfo  {journal} {Annals of Physics}\ }\textbf {\bibinfo {volume} {191}},\ \bibinfo {pages} {363--381} (\bibinfo {year} {1989})}\BibitemShut {NoStop}%
\bibitem [{\citenamefont {McNulty}\ and\ \citenamefont {Weigert}(2024)}]{MW24}%
  \BibitemOpen
  \bibfield  {author} {\bibinfo {author} {\bibfnamefont {D.}~\bibnamefont {McNulty}}\ and\ \bibinfo {author} {\bibfnamefont {S.}~\bibnamefont {Weigert}},\ }\bibfield  {title} {\bibinfo {title} {{Mutually Unbiased Bases in Composite Dimensions -- A Review}},\ }\href {https://arxiv.org/abs/2410.23997} {\bibfield  {journal} {\bibinfo  {journal} {arXiv:2410.23997}\ } (\bibinfo {year} {2024})}\BibitemShut {NoStop}%
\bibitem [{\citenamefont {Bengtsson}\ and\ \citenamefont {Ericsson}(2005)}]{Bengtsson2005}%
  \BibitemOpen
  \bibfield  {author} {\bibinfo {author} {\bibfnamefont {I.}~\bibnamefont {Bengtsson}}\ and\ \bibinfo {author} {\bibfnamefont {{\AA}.}~\bibnamefont {Ericsson}},\ }\bibfield  {title} {\bibinfo {title} {{Mutually Unbiased Bases and the Complementarity Polytope}},\ }\href {https://doi.org/10.1007/s11080-005-5721-3} {\bibfield  {journal} {\bibinfo  {journal} {Open Systems {\&} Information Dynamics}\ }\textbf {\bibinfo {volume} {12}},\ \bibinfo {pages} {107--120} (\bibinfo {year} {2005})}\BibitemShut {NoStop}%
\end{thebibliography}%

\appendix

\section{Shrinking factors of highly symmetric sets of qutrit states}
\label{app:shr_sym_qutrit}

The method exposed in \cref{sec:polytopes} to obtain shrinking factors relies on rational polytopes, which severely impairs the elegance of the solutions, as this small denominator constraint restricts the overall symmetry.
Here we expose a somehow more natural construction, which is unfortunately limited by our capacity to enumerate facets of the resulting polytope.
When we are able to do so, we see that the sets of qutrit states crafted this way achieve remarkable shrinking factors (proven up to numerical precision) given their small number of points.
We believe they could be of independent interest and provide their coordinates to facilitate their usage~\cite{gitdata}.

The technique starts from a symmetric structure, for instance, a complex projective design as classified in~\cite{hugues2021spherical}, and enumerates the facets of the resulting polytope.
As explained around \cref{eq:shr}, the shrinking factor is limited by the few facets reaching the minimum.
A natural way of extending the polytope to increase its shrinking factor is then to add the projectors onto the maximum eigenvalues of the facets saturating the minimum in \cref{eq:shr}, which is in general a strict subset of all facets.
Iteratively repeating this procedure leads to bigger and bigger structures with an increasing shrinking factor.
However, the method obviously generates sooner of later polytopes for which the facet enumeration is intractable.

In \cref{tab:symmetric_qutrit}, we give a few families of polytopes obtained following this technique.
The first two lines exhibit some kind of duality, but if the polytope with 21 vertices indeed has 24 facets whose eigenprojectors give rise to the other polytope, this one has 2268 facets, only 21 of which reach the minimum in \cref{eq:shr}.
Two of the ST27 orbits also enjoy this property.
Note also that, out of the 81 facets of the polytope formed by MUBs (see \cref{sec:polytopes}), only 9 achieve the minimum in \cref{eq:shr}, and the corresponding eigenprojectors happen to form a symmetric informationally complete (SIC) POVM.

\begin{table}[ht]
  \centering
  \scalebox{0.9}{
    \begin{tabular}{|c|cc|cc|cc|cc|cc|cc|c|}
      \hline
      ST & \multicolumn{2}{c|}{Step 0} & \multicolumn{2}{c|}{Step 1} & \multicolumn{2}{c|}{Step 2} & \multicolumn{2}{c|}{Step 3} & \multicolumn{2}{c|}{Step 4} & \multicolumn{2}{c|}{Step 5} & Comments \\ \hline
      \multirow{3}{*}{24}
      & 21 & 0.4129 & 45  & 0.5323 & 129 & 0.6991 & 465 & ?      &     &        &     &   & \cite{hugues2021spherical} \\
      & 24 & 0.4960 & 45  & 0.5323 & 129 & 0.6991 & 465 & ?      &     &        &     &   &                            \\
      & 28 & 0.3627 & 52  & 0.5    & 73  & 0.6989 & 409 & ?      &     &        &     &   & \cite{hugues2021spherical} \\ \hline
      \multirow{3}{*}{25}
      & 12 & 0.25   & 21  & 0.4766 & 75  & 0.6543 & 147 & 0.7541 & 291 & 0.8177 & 507 & ? & MUBs                       \\ 
      & 36 & 0.5    & 48  & 0.6168 & 120 & 0.625  & 129 & 0.7721 & 345 & 0.8043 & 453 & ? & \cite{hugues2021spherical} \\ 
      & 72 & 0.4698 & 81  & 0.5686 & 93  & 0.7158 & 165 & 0.7761 & 381 & ?      &     &   &                            \\ \hline
      \multirow{3}{*}{27}
      & 36 & 0.5854 & 81  & 0.6920 & 153 & 0.7820 & 513 & ?      &     &        &     &   & \cite{hugues2021spherical} \\
      & 45 & 0.5854 & 81  & 0.6920 & 153 & 0.7820 & 513 & ?      &     &        &     &   & \cite{hugues2021spherical} \\
      & 60 & 0.5723 & 132 & 0.7217 & 492 & ?      &     &        &     &        &     &   & \cite{hugues2021spherical} \\ \hline
    \end{tabular}
  }
  \caption{
    Cardinality and shrinking factor of sets of 3-dimensional pure states constructed via the iterative procedure described in the main text starting from orbits of complex reflection groups indicated with their Shephard-Todd (ST) number.
    Most initial structures (step 0) are listed in~\cite{hugues2021spherical} but, using the code from~\cite{NDBG20}, two new ones appeared.
    The shrinking factor was computed up to some numerical precision, in contrast with the exact computations of panda described in \cref{sec:polytopes}.
    Question marks indicate polytope for which we could not enumerate all facets.
    Higher precision values as well as coordinates can be found in the corresponding repository~\cite{gitdata}.
  }
  \label{tab:symmetric_qutrit}
\end{table}

\section{Shrinking factors of complete sets of mutually unbiased bases}
\label{app:mubs}

The methods to construct polytopes with good shrinking factors discussed in \cref{sec:polytopes} and \cref{app:shr_sym_qutrit} are mostly limited by computational reasons.
Here we present a construction which is entirely analytical, so that such limitations are avoided.
Specifically, we consider the $d(d+1)$ states in a complete set of $d+1$ mutually unbiased bases (MUBs) in dimension $d$.
The dimensions where they are known to exist are the powers of prime numbers~\cite{WF89}, and it is still unknown whether they can be found in any other dimension~\cite{MW24}.
When such states, denoted $\ket{\varphi_a^x}$ for $a\in\{1,\ldots,d\}$ and $x\in\{1,\ldots,d+1\}$, exist, we claim that their shrinking factor is
\begin{equation}
  r=\frac{1}{d(1-\nu)+1},\quad\text{where}\quad\nu=\min_{\vec{j}\in\{1,\ldots,d\}^{d+1}}\left\{\min\Sp\left(\sum_x\ketbra{\varphi_{j_x}^x}\right)\right\}.
  \label{eq:shr_mub}
\end{equation}
Importantly, when $d$ is odd, for the standard construction of complete sets of MUBs~\cite{WF89}, the equality $\nu=0$ is known from~\cite[Appendix~E.3.4]{designolle2019quantifiers}, so that the shrinking factor reduces to $1/(d+1)$ in this case, generalising the value $\frac14$ observed in~\cref{tab:symmetric_qutrit}.

The proof of \cref{eq:shr_mub} follows from~\cite{Bengtsson2005}, where the facets of the so-called complementarity polytope with vertices $\ketbra{\varphi_a^x}$ are shown to be the $d^{d+1}$ inequalities
\begin{equation}
  \Tr\left(F_{\vec{j}}\,\rho\right)\leq-1,\quad\text{where}\quad F_{\vec{j}}=-\sum_x\ketbra{\varphi_{j_x}^x}.
\end{equation}
Noting that $\Tr F_{\vec{j}}=-(d+1)$ and $\max_{\vec{j}}\Sp F_{\vec{j}}=-\nu$, we can use \cref{eq:shr} to derive the expression given in \cref{eq:shr_mub}.

\end{document}